%% file: main.tex
\RequirePackage{amsmath}
\documentclass[a4paper, 10pt, reqno]{amsart}
\usepackage{amssymb, url, color, pb-diagram, graphicx, amscd, pb-diagram,mathrsfs}
\usepackage[colorlinks=true, bookmarks=true, pdfstartview=FitH, pagebackref=true]{hyperref}
\usepackage[nodayofweek]{datetime}
\usepackage{comment}
\usepackage{enumerate,stmaryrd} 
\usepackage{euscript,enumitem}
\usepackage{graphicx}
\usepackage{mathabx}
\usepackage{leftindex}
\usepackage[space, compress, sort]{cite}
\usepackage{empheq}
\usepackage[usenames,dvipsnames]{xcolor}
\usepackage{multirow}
\usepackage{microtype}
\usepackage[colorlinks=true, bookmarks=true, pdfstartview=FitH, pagebackref=true]{hyperref}

\input{def}

\date{\today}

\keywords{Einstein constraint equations, non-constant mean curvature, conformal method, Lichnerowicz equation, compact manifold, prescribed scalar curvature}

\subjclass[2000]{53C21 (Primary), 35Q75, 53C80, 83C05 (Secondary)}

\begin{document}


\author{Armand Coudray \and Romain Gicquaud}
\address[A. Coudray and R. Gicquaud]{Institut Denis Poisson \\ UFR Sciences et Technologie \\
    Facult\'e de Tours \\ Parc de Grandmont\\ 37200 Tours \\ FRANCE}
\email{\href{mailto: A. Coudray <Armand.Coudray@univ-tours.fr>}{armand.coudray@univ-tours.fr}, \href{mailto: R. Gicquaud <Romain.Gicquaud@idpoisson.fr>}{romain.gicquaud@idpoisson.fr}}

\title[Uniqueness for the conformal method]{A new approach towards the construction of initial data in general relativity with positive Yamabe invariant and arbitrary mean curvature}
\keywords{Einstein constraint equations, non-CMC, conformal method, Banach fixed point theorem, positive Yamabe invariant, small TT-tensor}

\begin{abstract}
    This paper revisits the classical construction of initial data using the
    conformal method, as originally proposed by Holst, Nagy, and Tsogtgerel and
    later refined by Maxwell. We demonstrate that the existence of the solution can
    be proven using the Banach fixed point theorem, whereas the original proof
    relied on the Schauder fixed point theorem. This new approach has two main
    advantages: it guarantees the uniqueness of the solution to the equations of
    the conformal method as soon as one imposes a bound on the physical volume of
    it and it provides an explicit construction of the solution.
\end{abstract}

\date{\today}
\maketitle

\tableofcontents

\section{Introduction}\label{secIntro}
\input{intro}

\section{Estimates for the Lichnerowicz equation}\label{secLichnerowicz}
\input{estimates}

\section{Estimates for the vector equation}\label{secVector}
\input{vector}

\section{Estimates for the solution under a volume bound}\label{secVolumeBound}
\input{VolumeBound}

\section{Existence and uniqueness of the solution}\label{secUniqueness}
\input{uniqueness}


\bibliographystyle{amsplain}
\bibliography{biblio}

\end{document}

%% file: def.tex
\theoremstyle{plain}

\newtheorem{thm}{Theorem}
\newtheorem{theorem}[thm]{Theorem}

\newtheorem{lemma}[thm]{Lemma}

\newtheorem{proposition}[thm]{Proposition}

\theoremstyle{remark}

\theoremstyle{definition}

\newcounter{mnotecount}[section]

\renewcommand{\phi}{\varphi}
\renewcommand{\epsilon}{\varepsilon}

\newcommand{\bR}{\mathbb{R}}
\newcommand{\bL}{\mathbb{L}}

\newcommand{\cY}{\mathcal{Y}}

\newcommand{\ghat}{{\widehat{g}}}

\newcommand{\Khat}{\widehat{K}}

\newcommand{\gbar}{\overline{g}}

\renewcommand{\hbar}{\overline{h}}

\newcommand{\Abar}{\overline{A}}
\newcommand{\Bbar}{\overline{B}}

\newcommand{\phibar}{\overline{\varphi}}

\newcommand{\ubar}{\overline{u}}

\newcommand{\Sring}{\mathring{S}}

\newcommand{\definedas}{\mathrel{\raise.095ex\hbox{\rm :}\mkern-5.2mu=}}

\let\<\langle
\let\>\rangle

\DeclareMathOperator{\tr}{tr}
\DeclareMathOperator{\divg}{div}

\DeclareMathOperator{\vol}{Vol}

\newcommand{\scal}{\mathrm{Scal}}

\def\XXint#1#2#3{{\setbox0=\hbox{$#1{#2#3}{\int}$}
            \vcenter{\hbox{$#2#3$}}\kern-.5\wd0}}

%% file: intro.tex
The resolution of the Cauchy problem in general relativity by Y.~Choquet-Bruhat
and R.~Geroch~\cite{CB1, CB2} marked a decisive starting point for the
systematic construction of increasingly large classes of initial data sets. It
is by now well understood that Einstein’s equations are not hyperbolic \emph{a
    priori}: when formulated on a spacetime foliated by spacelike hypersurfaces,
part of the equations imposes restrictions on the admissible initial data
rather than governing their evolution.

These restrictions take the form of a coupled system of nonlinear elliptic
equations on the initial hypersurface, known as the \emph{constraint
    equations}. In the vacuum case, they read
\[
    \begin{cases}
        \scal^{\ghat} + \left(\tr_{\ghat} \Khat\right)^2 - \left|\Khat\right|_{\ghat}^2 & = 0, \\
        \divg_{\ghat} \Khat - d \left(\tr_{\ghat} \Khat\right)                          & = 0.
    \end{cases}
\]
Here, $\ghat$ is a Riemannian metric on the spacelike hypersurface $M$ and
$\Khat$ denotes its second fundamental form. We refer the reader
to~\cite{BartnikIsenberg} or~\cite{ChoquetBruhat} for introductory accounts of
this topic.

Among the various approaches that have been developed to solve the constraint
equations, the \emph{conformal method}, introduced by J.~York
in~\cite{YorkDecomposition}, has emerged as the most prominent and widely used
framework. We refer to~\cite{CarlottoReview} for a comprehensive overview of
the different methods that have been proposed.

The basic principle of the conformal method consists in prescribing part of the
initial data within a given conformal class and reducing the constraint
equations to a coupled elliptic system for a scalar conformal factor and a
vector field. More precisely, one sets
\[
    \begin{cases}
        \ghat & = \phi^\kappa g,                                      \\
        \Khat & = \dfrac{\tau}{n} \ghat + \phi^{-2} (\sigma + \bL W),
    \end{cases}
\]
where $g$ is a background Riemannian metric, $\kappa = \frac{4}{n-2}$, $\tau$
is the prescribed mean curvature, $\sigma$ is a transverse-traceless tensor
with respect to $g$, and $\bL$ denotes the conformal Killing operator
\[
    (\bL W)_{ij} = \nabla_i W_j + \nabla_j W_i - \frac{2}{n} \nabla^k W_k g_{ij}.
\]
The constraint equations are then equivalent to the following system:
\begin{subequations}\label{eqConstraints}
    \begin{align}
        - \frac{4(n-1)}{n-2} \Delta \phi + \scal\,\phi + \frac{n-1}{n} \tau^2 \phi^{N-1}
         & = \frac{|\sigma + \bL W|^2}{\phi^{N+1}}, \label{eqLichnerowicz} \\
        \Delta_{\bL} W
         & = \frac{n-1}{n} \phi^N d\tau, \label{eqVector}
    \end{align}
\end{subequations}
where $n$ denotes the dimension of the manifold $M$, $\Delta_{\mathbb L} :=
    -\frac 12\mathbb L^{\ast}(\mathbb L \cdot)$ and $N := \frac{2n}{n-2}$.

This system is commonly referred to as the \emph{conformal constraint
    equations}. Equation~\eqref{eqLichnerowicz} is known as the \emph{Lichnerowicz
    equation}, while~\eqref{eqVector} is usually called the \emph{vector equation}.

A first major breakthrough in the study of this system was achieved by
J.~Isenberg in 1995, who classified the solutions of the conformal constraint
equations in the constant mean curvature (CMC) setting in~\cite{Isenberg}. This
case is not only physically relevant, but also leads to a striking
simplification of the system. Indeed, when $\tau$ is constant, the vector
equation reduces to $\Delta_{\bL} W = 0$, which implies that $W \equiv 0$
provided that the metric $g$ admits no conformal Killing vector fields.

In this situation, one is thus left with the Lichnerowicz
equation~\eqref{eqLichnerowicz} alone, a problem that is by now well understood
on compact manifolds; see for instance~\cite{GicquaudLichnerowicz}.

Using the implicit function theorem, the solvability of the conformal
constraint equations in the near-CMC regime was established under various
assumptions by several authors, but the most relevant for us is the work by
P.~Allen, A.~Clausen and J.~Isenberg in~\cite{ACI08}.

In contrast, a strikingly different approach was introduced in 2008 by
M.~Holst, G.~Nagy and G.~Tsogtgerel~\cite{HNT1, HNT2}, and subsequently refined
by D.~Maxwell~\cite{MaxwellNonCMC}. While the mean curvature $\tau$ had long
been regarded as the main obstruction in the non-CMC setting, this method
showed that $\tau$ can in fact be chosen arbitrarily, provided the Yamabe
invariant of $g$ is positive and the transverse-traceless tensor $\sigma$ is
sufficiently small in $L^\infty$, but not identically zero. The regularity
assumptions on $\sigma$ were later weakened by T.~C.~Nguyen~\cite{Nguyen}.

This approach was subsequently reinterpreted by the second author and A.~Ngô
in~\cite{GicquaudNgo} as a perturbative argument near $\tau = 0$, followed by a
rescaling procedure, thereby paving the way for further generalizations, see
e.g.~\cite{GicquaudNguyen,GicquaudSmallTT}. One of the main questions left open
by this method concerned the uniqueness of solutions. As shown
in~\cite{Nguyen2}, global uniqueness cannot be expected in general. However,
the second author proved in~\cite{GicquaudUniqueness} that uniqueness does
hold, under the technical assumption that $|\sigma|$ is bounded away from zero,
provided the physical volume
\[
    V \definedas \vol(M, \ghat) = \int_M \phi^{N} \, d\mu^g
\]
does not exceed an explicit threshold.

The purpose of the present paper is to show that the Schauder fixed point
theorem at the heart of the approach introduced in~\cite{HNT1, HNT2,
    MaxwellNonCMC} can be replaced by the Banach contraction mapping theorem. This
is a significant strengthening: whereas the Schauder theorem is
non-constructive and yields existence alone, the Banach theorem simultaneously
establishes existence and uniqueness, and provides a convergent iterative
scheme to approximate the solution. As a byproduct, this approach also allows
us to remove the technical assumption of~\cite{GicquaudUniqueness} that
$|\sigma|$ is bounded away from zero.

Let $p > \frac{n}{2}$ be given. In what follows, we will make the following
regularity assumptions on the seed data $(M, g)$, $\sigma$ and $\tau$:
\begin{itemize}
    \item The metric $g$ belongs to $W^{2, p}(M, S_2M)$, has positive Yamabe invariant,
          and admits no non-trivial conformal Killing vector fields, i.e.\ for any vector
          field $V \in W^{1, 2}(M, TM)$, $\bL V \equiv 0 \Rightarrow V \equiv 0$.
    \item The mean curvature $\tau$ belongs to $L^\infty(M, \bR) \cap W^{1, n}(M, \bR)$,
    \item The TT-tensor $\sigma$ belongs to $L^{2p}(M, \Sring_2 M)$.
\end{itemize}

We remind the reader that the Yamabe invariant $\cY(M, g)$ of the metric $g$ is
defined as follows:
\[
    \cY(M, g) \definedas \inf_{\substack{u \in W^{1, 2}(M, \bR)\\ u \not\equiv 0}} \frac{\int_M \left[\frac{4(n-1)}{n-2} |du|^2 + \scal\, u^2\right] d\mu^g}{\left(\int_M |u|^N d\mu^g\right)^{2/N}}.
\]
In particular, the weak form of the Yamabe theorem states that $(M, g)$ has
positive Yamabe invariant if and only if there exists a metric $\gbar =
    \psi^\kappa g$ in the conformal class of $g$ having scalar curvature bounded
from below by a positive constant; see~\cite{LeeParker} for further details.

The result we prove in this paper is the following:
\begin{theorem}\label{thmMain}
    Let $(M, g)$ be a compact Riemannian manifold, $\tau$ a given function on $M$
    and $\sigma$ a non-zero TT-tensor for $g$ satisfying the regularity assumptions
    stated above. Let $V_{\max}$ be a small enough positive constant and $\omega_0
        > 0$. Then there exists a constant $c = c(M, g, \tau, \omega_0, V_{\max}, p) >
        0$ such that, if
    \[
        \|\sigma\|_{L^{2p}} \leq c \quad\text{and}\quad \|\sigma\|_{L^{2p}} \leq \omega_0 \|\sigma\|_{L^2},
    \]
    there exists a unique solution $(\phi, W) \in W^{2, p}(M, \bR) \times W^{2,
                p}(M, TM)$ to the system~\eqref{eqConstraints} such that
    \[
        V(\phi, W) \definedas \int_M \phi^N d\mu^g \leq V_{\max}.
    \]
\end{theorem}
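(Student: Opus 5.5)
We will set up the problem as a fixed point of a map defined on a suitable closed set of conformal factors, and show that this map is a contraction in a weighted $L^N$-type metric. To begin, we pass to a conformally related background $\gbar = \psi^\kappa g$ with $\scal^{\gbar} \geq \epsilon_0 > 0$. The system~\eqref{eqConstraints} is conformally covariant, with $\sigma$, $\phi$, $W$ and $\tau$ transforming in the standard way. We may therefore assume $\scal \geq \epsilon_0$, at the cost of constants depending on $\psi$.

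Given $\phi$ in the set
\[
\mathcal{B} = \Bigl\{\phi > 0 : \int_M \phi^N d\mu^g \leq V_{\max}\Bigr\},
\]
let $W = W(\phi)$ solve the vector equation~\eqref{eqVector}. Elliptic estimates for $\Delta_{\bL}$ (absence of conformal Killing fields) give
\[
\|\bL W\|_{L^{r}} \lesssim \|\phi^N d\tau\|_{L^{s}} ,
\]
and in particular, since $d\tau \in L^n$, a bound of $\|\bL W\|$ in terms of powers of $\|\phi\|_{L^\infty}$ and of $V$. Next, let $T(\phi)$ be the unique positive solution of the Lichnerowicz equation~\eqref{eqLichnerowicz} with this $W$. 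It exists because $\sigma \not\equiv 0$ and $\scal > 0$. Uniqueness of solutions in $\mathcal{B}$ is exactly uniqueness of fixed points of $T$ in $\mathcal{B}$, since any solution $(\phi,W)$ with $V \leq V_{\max}$ is such a fixed point. So the whole theorem reduces to proving:
\begin{enumerate}
\item[(i)] $T$ maps a suitable closed subset of $\mathcal{B}$ (with additional $L^\infty$ control) into itself;
\item[(ii)] every fixed point in $\mathcal{B}$ lies in that subset;
\item[(iii)] $T$ is a contraction there.
\end{enumerate}

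For (i) and (ii) I will use the estimates of Section~\ref{secLichnerowicz} and Section~\ref{secVolumeBound}. The sub/supersolution bounds for the Lichnerowicz equation give
\[
\|T(\phi)\|_{L^\infty} \lesssim \|\sigma + \bL W\|_{L^{2p}}^{1/(N)}\text{-type powers},
\]
and an integration of the equation against $\phi^{N+1}$ gives
\[
\int_M \phi^{N} \lesssim \int_M |\sigma+\bL W|^2 .
\]
The volume bound $V\le V_{\max}$ with $V_{\max}$ small makes the $\bL W$ contribution (which scales like $\phi^N$, i.e.\ superlinearly) absorbable. This yields a priori bounds $\|\phi\|_{L^\infty} \lesssim \|\sigma\|_{L^{2p}}^{\alpha}$ for fixed points, and also for iterates, once $\|\sigma\|_{L^{2p}} \leq c$. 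The hypothesis $\|\sigma\|_{L^{2p}} \le \omega_0 \|\sigma\|_{L^2}$ enters in the complementary lower bound: $\phi$ is not much smaller than its sup in an averaged sense, so that ratios like $\|\phi\|_{L^\infty}/\|\phi\|_{L^N}$ are controlled. This replaces the earlier assumption that $|\sigma|$ is bounded below.

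For (iii), take $\phi_1,\phi_2$ in the invariant set, with corresponding $W_i$ and $u_i = T(\phi_i)$. First, $\|\bL(W_1-W_2)\|$ is bounded by $\|(\phi_1^N-\phi_2^N)d\tau\|$, which in turn is bounded by $\|\phi\|_{L^\infty}^{N-1}\|\phi_1-\phi_2\|$ times a constant. Then subtract the two Lichnerowicz equations. The left-hand operator is monotone in $u$, while the right-hand side $|\sigma+\bL W|^2 u^{-N-1}$ is decreasing in $u$. The difference $u_1-u_2$ therefore satisfies a linear equation with positive potential $\scal + \ldots \geq \epsilon_0$, with source
\[
\frac{|\sigma+\bL W_1|^2-|\sigma+\bL W_2|^2}{u_1^{N+1}} .
\]
The maximum principle or an $L^q$ estimate bounds $u_1-u_2$ by $\|\bL(W_1-W_2)\|\,\|\sigma+\bL W\|/\min u^{N+1}$ in a scale-adapted norm.

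\textbf{Main obstacle.} The main difficulty is the homogeneity bookkeeping in (iii). After rescaling $\phi = \lambda \tilde\phi$ with $\lambda \sim \|\sigma\|^{\alpha}$, the contraction constant should come out as a positive power of $\lambda$ (equivalently of $V_{\max}$) times constants depending on $\omega_0$, hence $<1$. The danger is that $\min u$ is not controlled from below when $\sigma$ degenerates. I would handle this by measuring the difference in a weighted norm, e.g.\ $\|(u_1-u_2)/u\|$, or by working with $\log u$ in the spirit of~\cite{GicquaudUniqueness}, using the $\omega_0$ comparison of $L^{2p}$ and $L^2$ norms. Once the contraction is established, Banach's theorem gives existence, uniqueness and convergence of the iteration. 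Finally, elliptic regularity upgrades the solution to $W^{2,p}$.
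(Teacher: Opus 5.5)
Your decomposition into (i)--(iii) is sound, and the two Lipschitz steps in (iii) are essentially the paper's Lemmas~\ref{lmEstimate_LW1-LW2} and~\ref{lmMajorationPsi}. The gap is that you stop exactly where the proof needs its one genuinely new ingredient: a \emph{pointwise} lower bound on $\inf_M u_1$ of the same order as the upper bound.

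The factor $(\inf_M u_1)^{-(N+1)}$ appears because $A_1^2-A_2^2$ is divided by $u_1^{N+1}$ pointwise. An ``averaged'' statement controlling $\|u\|_{L^\infty}/\|u\|_{L^N}$ therefore does not help. You need $\inf_M u_1\gtrsim\|\sigma\|_{L^{2p}}^{\frac{n-2}{2(n-1)}}$. Only then do $\|\bL(W_1-W_2)\|_{L^{2p}}\lesssim\|\sigma\|_{L^{2p}}^{\frac{n+2}{2(n-1)}}\|\phi_1-\phi_2\|_{L^{2pN}}$ and $\|2\sigma+\bL W_1+\bL W_2\|_{L^{2p}}\lesssim\|\sigma\|_{L^{2p}}$ combine into a contraction constant $\lesssim\|\sigma\|_{L^{2p}}^{1/(n-1)}$.

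Your proposed remedies do not avoid this. In the logarithmic formulation (of which $\|(u_1-u_2)/u\|$ is the linearization), the source becomes $(A_1^2-A_2^2)u_1^{-(N+2)}$, so the same factor reappears. In addition, $\scal$ drops out of the zero-order coefficient, which is then built only from $\tau^2u^{N-2}$ and $A^2u^{-(N+2)}$. Without a pointwise bound on $u_1$ you are back to needing $A$, hence $|\sigma|$, bounded below. That is precisely the assumption of~\cite{GicquaudUniqueness} that the theorem removes.

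The paper gets the bound from Maxwell's observation (Lemma~\ref{lmGreenLgtau}) that the Green function of $L_{g,\tau}=-\frac{4(n-1)}{n-2}\Delta+\scal+\frac{n-1}{n}\tau^2$ satisfies $G_\tau\geq m_{g,\tau}>0$. The solution of $L_{g,\tau}v=A^2$ then satisfies $v\geq m_{g,\tau}\|A\|_{L^2}^2$ everywhere, while $\|v\|_{L^\infty}\leq C_{g,\tau}\|A\|_{L^{2p}}^2$. Moreover, $\|v\|_{L^\infty}^{-\frac{N+1}{N+2}}v$ is a subsolution once $\|A\|_{L^{2p}}$ is small. This gives $\inf_M u\geq\mu_{g,\tau}\,\omega^{-\frac{3n-2}{2(n-1)}}\|A\|_{L^2}^{\frac{n-2}{2(n-1)}}$ with $\omega=\|A\|_{L^{2p}}/\|A\|_{L^2}$ (Proposition~\ref{propLowerBoundLich}). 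This is exactly where $\|\sigma\|_{L^{2p}}\leq\omega_0\|\sigma\|_{L^2}$ is used, via $\|A_1\|_{L^2}\geq\|\sigma\|_{L^2}$ and $\omega_1\leq\omega_0\bigl(1+C'\|\sigma\|_{L^{2p}}^{1/(n-1)}\bigr)$. Another possible route: a weak Harnack inequality for the positive supersolution $u$ of $-\frac{4(n-1)}{n-2}\Delta u+\bigl(\scal+\frac{n-1}{n}\tau^2u^{N-2}\bigr)u\geq 0$, combined with the integrated equation, would also turn your averaged bound into a pointwise one. Either way, some such argument has to be supplied.

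Three smaller points also need fixing.

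\begin{enumerate}
\item The estimate ``$\int_M\phi^N\lesssim\int_M|\sigma+\bL W|^2$'' has the wrong homogeneity and is false in the small-data regime of the theorem. There $\phi$ is of order $\|\sigma\|_{L^{2p}}^{2/(N+2)}$, so the left side is of order $\|\sigma\|_{L^{2p}}^{n/(n-1)}\gg\|\sigma\|_{L^{2p}}^2$. Testing against $\phi^{N+1}$, as you propose, and applying Sobolev to $\phi^{N/2+1}$ gives instead $\mu_L\|\phi^N\|_{L^{N/2+1}}^{2\frac{n-1}{n}}\leq\|A\|_{L^2}^2$. The absorption you describe (Proposition~\ref{propVolumeBound}) works precisely because $\|\bL W\|_{L^2}^2\lesssim V_{\max}^{2/n}\|\phi^N\|_{L^{N/2+1}}^{2\frac{n-1}{n}}$ has the same homogeneity.
\item This only controls $\|\phi^N\|_{L^{N/2+1}}$. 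Getting the $L^{2p}$ control of $\bL W$ that (ii) needs requires a finite bootstrap through Proposition~\ref{propEstimateLichUpper}, as in Proposition~\ref{propBootstrap}. For a fixed point it suffices to run it along $\phi=\Phi(\phi)$. The invariant set in (i) can then simply be a ball $\{\|\phi\|_{L^{2pN}}\leq C\|\sigma\|_{L^{2p}}^{\frac{n-2}{2(n-1)}}\}$, which is stable for small $\sigma$ because $N\cdot\frac{n-2}{2(n-1)}=\frac{n}{n-1}>1$.
\item The smallness of the contraction constant comes from $\|\sigma\|_{L^{2p}}\leq c$, not from $V_{\max}$. $V_{\max}$ is fixed once $\mu_L-\frac{2}{\mu_V}\bigl(\frac{n-1}{n}\bigr)^2\|d\tau\|_{L^n}^2V_{\max}^{2/n}>0$.
\end{enumerate}
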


The outline of the paper is as follows. In Section~\ref{secLichnerowicz}, we
establish estimates for the Lichnerowicz equation~\eqref{eqLichnerowicz},
including a lower bound on its solution. Section~\ref{secVector} is devoted to
estimates for the vector equation~\eqref{eqVector}. In
Section~\ref{secVolumeBound}, we show that, under a volume bound, the norm of
any solution $(\phi, W)$ to the conformal constraint equations is controlled by
the norm of $\sigma$. Finally, in Section~\ref{secUniqueness}, we prove that,
for $\sigma$ small enough, the fixed point map $\Phi$ is a contraction on a
suitable complete metric space, and deduce the existence and uniqueness of the
solution from the Banach fixed point theorem.

\subsection*{Acknowledgments}

This work was partially supported by the French National Research Agency (ANR)
under grants ANR-23-CE40-0010-02 (Einstein constraints: past, present, and
future, EINSTEIN-PPF) and ANR-25-CE40-4883 (Scattering, Holography and General
Relativity). Armand Coudray is grateful to the Institut Denis Poisson for its
hospitality.

%% file: estimates.tex
The goal of this section is to obtain estimates for the solution $\phi$ to the
Lichnerowicz equation~\eqref{eqLichnerowicz} in the case where the manifold
$(M, g)$ has a positive Yamabe invariant. To slightly lighten notations, we set
$A \definedas |\sigma + \bL W|$. So we study the following elliptic equation:

\begin{equation} \label{eqLich}
    - \frac{4(n-1)}{n-2} \Delta \phi + \scal\,\phi + \frac{n-1}{n} \tau^2 \phi^{N-1} = \frac{A^2}{\phi^{N+1}}.
\end{equation}

We assume, in what follows, that $A \in L^{2p}(M, \bR)$, $A \not\equiv 0$.
Existence and uniqueness of the solution to the Lichnerowicz equation is by now
standard, see e.g.~\cite{GicquaudLichnerowicz}, as well as the continuity of
the mapping sending $A \in L^{2p}$ to $\phi \in W^{2, p}(M, \bR)$, see
e.g.~\cite{DahlGicquaudHumbert}. Our first goal is to obtain $L^p$-estimates
for positive powers of $\phi$. These estimates are well established now (see
e.g.\ \cite[Lemma 2.11 and Proposition 2.12]{Pailleron}) but we give a proof of
it for the sake of completeness.

\begin{proposition}\label{propEstimateLichUpper}
    Let $\phi$ be the solution to Equation~\eqref{eqLich}, with $A \in L^q(M,
        \bR)$, $q \geq 2$, $A \not\equiv 0$. Then,

    \begin{itemize}
        \item If $q < n$, we have $\left\|\phi\right\|_{L^r}^{N+2} \lesssim \|A\|_{L^q}^2$
              with $r$ such that
              \[
                  \frac{1}{q} = \frac{1}{n} + \frac{2(n-1)}{n-2} \frac{1}{r}.
              \]
        \item If $q \geq n$, we have $\left\|\phi\right\|_{L^r}^{N+2} \lesssim \|A\|_{L^q}^2$
              for any $r \geq N$.
    \end{itemize}
\end{proposition}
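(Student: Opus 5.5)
We test the Lichnerowicz equation~\eqref{eqLich} against a positive power of $\phi$ and combine the positive Yamabe invariant (a Sobolev-type coercivity) with Hölder's inequality. Since the Yamabe invariant is conformally invariant, we may use it directly in the form
\[
\cY(M,g)\left(\int_M |u|^N d\mu^g\right)^{2/N} \le \int_M \left[\frac{4(n-1)}{n-2}|du|^2 + \scal\, u^2\right] d\mu^g, \qquad \cY(M,g)>0 .
\]
We do not need the conformal change to positive scalar curvature. We will drop the nonnegative term $\frac{n-1}{n}\tau^2\phi^{N-1}$, since it only helps.

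\textbf{Step 1: testing.} Multiply~\eqref{eqLich} by $\phi^{2\alpha-1}$ for some $\alpha\ge 1$, and set $u=\phi^\alpha$. This is legitimate since $\phi\in W^{2,p}$ is continuous and positive. Integrating by parts gives
\[
\frac{4(n-1)}{n-2}\frac{2\alpha-1}{\alpha^2}\int_M |du|^2 + \int_M \scal\, u^2 \le \int_M A^2 \phi^{2\alpha - N - 2}.
\]
Because $\frac{2\alpha-1}{\alpha^2}\le 1$, we cannot directly compare with the Yamabe quotient when $\alpha>1$. We handle this by splitting $\scal = \frac{2\alpha-1}{\alpha^2}\scal + \left(1-\frac{2\alpha-1}{\alpha^2}\right)\scal$. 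The second piece is bounded by $C\int \phi^{2\alpha}$, and we absorb it by interpolation between $L^{N\alpha}$ and a low norm (or via a bootstrap starting from $\alpha=1$).

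For $\alpha=1$ the estimate is clean: $\cY\|\phi\|_{L^N}^2\le\int A^2\phi^{-N}$. The right-hand side involves a negative power of $\phi$, which is uncomfortable. So instead we choose $\alpha$ so that the right-hand exponent $2\alpha-N-2$ is positive and can be controlled by Hölder against $\|u\|_{L^N}$.

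\textbf{Step 2: Hölder and closing.} Apply Hölder to get
\[
\int A^2\phi^{2\alpha-N-2}\le \|A\|_{L^q}^2\,\|\phi\|_{L^{s}}^{2\alpha-N-2}, \qquad \frac{2}{q}+\frac{2\alpha-N-2}{s}=1 .
\]
We pick $s=N\alpha$, which closes the estimate against the left-hand side $\cY\|\phi\|_{L^{N\alpha}}^{2\alpha}$. Dividing by $\|\phi\|_{L^{N\alpha}}^{2\alpha-N-2}$ yields $\|\phi\|_{L^{N\alpha}}^{N+2}\lesssim\|A\|_{L^q}^2$, with $r=N\alpha$. The constraint becomes
\[
\frac2q = 1-\frac{2\alpha-N-2}{N\alpha}=\frac{(N-2)\alpha+N+2}{N\alpha}.
\]
Using $N=\frac{2n}{n-2}$ and $r=N\alpha$, this rearranges to $\frac1q=\frac1n+\frac{2(n-1)}{n-2}\frac1r$, which is exactly the stated relation. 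When $q<n$ this determines a finite $r$. The requirement $\alpha\ge1$ corresponds to $r\ge N$, which holds once $q\ge 2$: at $q=2$ we get $\alpha=1$, where the right-hand exponent $-N$ is handled directly.

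When $q\ge n$, the relation forces $r=\infty$ in the limit. Since $M$ is compact, $\|A\|_{L^{q'}}\lesssim\|A\|_{L^q}$ for every $q'<n$, and any finite $r\ge N$ arises from some $q'\in[2,n)$. Applying the first case with $q'$ gives the second bullet.

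\textbf{Main obstacle.} The hard part is the mismatch between the gradient coefficient $\frac{2\alpha-1}{\alpha^2}$ and the $\scal$ term when $\alpha>1$, because $\scal$ may be negative somewhere. The first thing to try is to rewrite $-\int\phi^{2\alpha-1}\Delta\phi$ using the exact identity $\int|d\phi^\alpha|^2=\frac{\alpha^2}{2\alpha-1}\int\phi^{2\alpha-1}(-\Delta\phi)$. The leftover is $\left(\frac{\alpha^2}{2\alpha-1}-1\right)\int\scal^-\phi^{2\alpha}$. This is lower order, and we control it by $\epsilon\|\phi\|_{L^{N\alpha}}^{2\alpha}+C_\epsilon\|\phi\|_{L^{\alpha}}^{2\alpha}$, using $\scal\in L^p$ with $p>n/2$. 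The low norm is then bounded inductively from the $\alpha=1$ estimate, or by Hölder, which also gives constants depending on $\alpha$. If this fails, the fallback is to work in a conformal metric $\gbar=\psi^\kappa g$ with $\scal^{\gbar}>0$. There the $\scal$ term has a good sign and can be dropped, at the cost of a constant from the factor $\psi$, which is bounded above and below.
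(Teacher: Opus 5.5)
Your main line, which works directly with the Yamabe quotient of $g$, does not close. First, a slip in Step~2. Your H\"older relation is equivalent to $2\alpha-N-2=N\alpha\bigl(1-\frac{2}{q}\bigr)$, so $q=2$ gives $\alpha=\frac N2+1$, not $\alpha=1$. The value $\alpha=1$ corresponds to $q=1$, where the exponent $-N$ is negative and H\"older cannot bound $\int_M A^2\phi^{-N}$ from above. Hence the whole first bullet lives in the range $\alpha\ge\frac N2+1$. There $\theta\definedas\frac{2\alpha-1}{\alpha^2}\le\frac{N+1}{(N/2+1)^2}$ (e.g.\ $\frac{7}{16}$ for $n=3$), and the form $\theta\frac{4(n-1)}{n-2}\int_M|du|^2+\int_M\scal\,u^2$ can be indefinite when $\scal$ changes sign, even though $\cY(M,g)>0$.

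Your remedy leaves the term $C_\epsilon\|\phi\|_{L^\alpha}^{2\alpha}$, which has the same homogeneity as the left-hand side $\|\phi\|_{L^{N\alpha}}^{2\alpha}$. It can only be disposed of with an independent bound $\|\phi\|_{L^\alpha}^{N+2}\lesssim\|A\|_{L^q}^2$, and neither source you name supplies one:
the $\alpha=1$ identity $\cY(M,g)\|\phi\|_{L^N}^2\le\int_M A^2\phi^{-N}\,d\mu^g$ has a negative power of $\phi$ on the right, which cannot be bounded above without a pointwise lower bound on $\phi$. The only such bound available (Proposition~\ref{propLowerBoundLich}) needs $\|A\|_{L^{2p}}$ small and involves the ratio $\omega$. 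H\"older, for its part, only returns $C\|\phi\|_{L^{N\alpha}}^{2\alpha}$ with a constant that is not small. So the bootstrap has no starting point.

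What you need is your fallback, which is exactly the paper's proof. Take $\gbar=\psi^\kappa g$ with $\scal_{\gbar}\ge\mu>0$ and rewrite the equation for $\phibar=\psi^{-1}\phi$, $\Abar=\psi^{-N}A$ using the conformal covariance of the conformal Laplacian. Then test with $\phibar^{2\alpha-1}$ for $\alpha\ge\frac N2+1$, apply H\"older with $s=N\alpha$ as in your Step~2, and return to $g$ using that $\psi$ is bounded above and below.

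One correction: the scalar curvature term must not be dropped, since it controls the zeroth-order part of the Sobolev inequality. The paper keeps $\int_M\scal_{\gbar}\phibar^{2\alpha}\,d\mu^{\gbar}\ge\mu\int_M\phibar^{2\alpha}\,d\mu^{\gbar}$ to control the full $W^{1,2}$-norm of $\phibar^\alpha$. Equivalently, since $\scal_{\gbar}>0$ and $\theta\le 1$, you may scale that term down rather than discard it: $\theta\frac{4(n-1)}{n-2}\int_M|d\phibar^\alpha|_{\gbar}^2\,d\mu^{\gbar}+\int_M\scal_{\gbar}\phibar^{2\alpha}\,d\mu^{\gbar}\ge\theta\,\cY(M,g)\,\|\phibar^\alpha\|_{L^N(M,\gbar)}^2$ by conformal invariance of $\cY$. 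This makes your original Yamabe-quotient idea rigorous.

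Finally, exponents $q'\in[2,n)$ only produce $r\ge N\bigl(\frac N2+1\bigr)$, not every $r\ge N$. The remaining exponents follow from H\"older on the compact manifold $M$, a step the paper's proof also leaves implicit.
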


In particular, the proposition shows that there exists a constant $\mu_L > 0$
such that, for any $A \in L^2(M, \bR)$, $A \not\equiv 0$, the solution $\phi$
to the Lichnerowicz equation~\eqref{eqLich} satisfies
\begin{equation}\label{eqDefMuL}
    \mu_L \|\phi^N\|_{L^{\frac{N}{2}+1}}^{2\frac{n-1}{n}} \leq \|A\|_{L^2}^2.
\end{equation}

\begin{proof}
    From~\cite{LeeParker}, we know that there exists a positive function $\psi \in
        W^{2, p}(M, \bR)$ such that the metric $\gbar \definedas \psi^\kappa g$ has
    scalar curvature bounded from below by a positive constant $\mu$:
    \[
        \scal_{\gbar} \geq \mu > 0 \quad \text{a.e.}.
    \]

    For any function $u \in W^{2, p}(M, \bR)$, the conformal transformation law of
    the conformal Laplacian reads:
    \begin{equation}\label{eqConfTrans}
        - \frac{4(n-1)}{n-2} \Delta_{\gbar} \ubar + \scal_{\gbar}\,\ubar = \psi^{-1-\kappa} \left(- \frac{4(n-1)}{n-2} \Delta u + \scal\,u\right),
    \end{equation}
    where $\ubar \definedas \psi^{-1} u$. We rewrite Equation~\eqref{eqLich} with
    respect to the metric $\gbar = \psi^\kappa g$, setting $\phibar =
        \psi^{-1}\phi$ and $\Abar = \psi^{-N} A$. Using the conformal transformation
    law~\eqref{eqConfTrans}, we obtain
    \[
        -\frac{4(n-1)}{n-2}\Delta_{\gbar}\phibar
        + \scal_{\gbar}\,\phibar
        + \frac{n-1}{n}\tau^2 \phibar^{N-1}
        = \frac{\Abar^2}{\phibar^{N+1}}.
    \]

    Let $\alpha \geq \frac{N}{2} + 1$. Multiplying the equation by
    $\phibar^{2\alpha-1}$ and integrating over $M$, an integration by parts yields
    \[
        \int_M \Abar^2 \phibar^{2\alpha-N-2}\, d\mu^{\gbar}
        \ge
        c_\alpha \int_M |d\phibar^\alpha|_{\gbar}^2\, d\mu^{\gbar}
        + \int_M \scal_{\gbar}\, \phibar^{2\alpha}\, d\mu^{\gbar},
    \]
    where
    \[
        c_\alpha
        = \frac{4(n-1)}{n-2}\frac{2\alpha-1}{\alpha^2}.
    \]
    Since $\scal_{\gbar} \ge \mu > 0$, the right-hand side controls the $W^{1,
                2}$-norm of $\phibar^\alpha$.

    We now estimate the left-hand side using H\"older's inequality. Let $a,b>1$
    satisfy $\frac{1}{a} + \frac{1}{b} = 1$, and choose $b$ such that
    \[
        (2\alpha - N - 2)b = N\alpha.
    \]
    Then
    \[
        \int_M \Abar^2 \phibar^{2\alpha-N-2}\, d\mu^{\gbar}
        \le
        \|\Abar\|_{L^{2a}(M,\gbar)}^2
        \left(\int_M \phibar^{N\alpha}\, d\mu^{\gbar}\right)^{1/b}.
    \]
    A straightforward computation gives
    \[
        \frac{1}{b}
        = \frac{2\alpha - N - 2}{N\alpha}
        = \frac{n-2}{n} - \frac{2(n-1)}{n\alpha},
        \qquad
        \frac{1}{2a}
        = \frac{1}{n} + \frac{n-1}{n\alpha}.
    \]
    By the Sobolev inequality on $(M,\gbar)$, there exists a constant $C>0$ such
    that
    \[
        \left(\int_M \phibar^{N\alpha}\, d\mu^{\gbar}\right)^{2/N}
        \le
        C \left(
        \int_M |d\phibar^\alpha|_{\gbar}^2\, d\mu^{\gbar}
        + \int_M \phibar^{2\alpha}\, d\mu^{\gbar}
        \right).
    \]
    Combining the previous inequalities yields
    \[
        \left(\int_M \phibar^{N\alpha}\, d\mu^{\gbar}\right)^{\frac{2(n-1)}{n\alpha}}
        \lesssim
        \|\Abar\|_{L^{2a}(M,\gbar)}^2.
    \]
    Since $\psi$ is uniformly bounded from above and below on $M$, we conclude that
    \[
        \|\phi\|_{L^{N\alpha}(M,g)}^{N+2}
        \lesssim
        \|A\|_{L^q(M,g)}^2,
        \qquad q = 2a < n.
    \]
    This proves the first case of the proposition.

    If $q \ge n$, then $A \in L^{q'}(M, \bR)$ for every $q' < n$, and the previous
    argument applies with any such $q'$, yielding the desired estimate for all $r
        \ge N$.
\end{proof}

Proving that the solution $\phi$ to the Lichnerowicz equation~\eqref{eqLich} is
bounded away from zero by some explicit constant requires some work. The main
ingredient we use was found by Maxwell in~\cite{MaxwellNonCMC}:

\begin{lemma}\label{lmGreenLgtau}
    Let $g\in W^{2,p}(M, S_2M)$ with $p > \frac{n}{2}$ on a compact manifold $M^n$ ($n\ge3$), and let
    $\tau\in L^\infty(M, \bR)$. For $q\in(1,p]$, consider the operator
    \[
        L_{g,\tau} : W^{2,q}(M, \bR)\to L^q(M, \bR),\qquad
        L_{g,\tau}(v):= -\frac{4(n-1)}{n-2} \Delta_g v + \scal_g\, v + \frac{n-1}{n}\tau^2 v,
    \]
    Then $L_{g,\tau}$ is an isomorphism for every $q\leq p$ whose inverse is given
    by a Green kernel $G_\tau(x,y)$: for any $f\in L^q(M, \bR)$ the unique solution
    $v\in W^{2,q}(M, \bR)$ of $L_{g,\tau}v=f$ satisfies
    \[
        v(x)=\int_M G_\tau(x,y)f(y)\,d\mu^g(y)\qquad \text{for a.e.\ }x\in M.
    \]
    Moreover, if $q>\frac{n}{2}$, the identity holds for all $x\in M$. Finally,
    there exists a constant $m_{g, \tau} > 0$ such that
    \[
        G_\tau(x,y)\geq m_{g,\tau} \quad \forall x, y \in M, x \neq y.
    \]
\end{lemma}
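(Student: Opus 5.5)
We will prove Lemma~\ref{lmGreenLgtau} in three stages: invertibility of $L_{g,\tau}$, the integral representation via a Green kernel, and the uniform positive lower bound on $G_\tau$.

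\textbf{Step 1: invertibility.} Since $g\in W^{2,p}$ with $p>\frac n2$, the coefficients satisfy $\scal_g\in L^p$ and $\tau^2\in L^\infty$. Standard $L^q$ elliptic theory for operators with such coefficients therefore shows that $L_{g,\tau}:W^{2,q}\to L^q$ is Fredholm of index zero for $1<q\le p$. It then suffices to prove injectivity.

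To do so, let $v\in W^{2,q}$ satisfy $L_{g,\tau}v=0$. Elliptic bootstrap gives $v\in W^{2,p}$. We then pass to the metric $\gbar=\psi^\kappa g$ with $\scal_{\gbar}\ge\mu>0$, using \eqref{eqConfTrans}. With $\bar v=\psi^{-1}v$, the equation becomes
\[
-\tfrac{4(n-1)}{n-2}\Delta_{\gbar}\bar v+\left(\scal_{\gbar}+\tfrac{n-1}{n}\psi^{-\kappa}\tau^2\right)\bar v=0 .
\]
Multiplying by $\bar v$ and integrating by parts gives $\int\scal_{\gbar}\bar v^2\le0$, hence $v=0$.

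\textbf{Step 2: Green kernel.} We define $G_\tau$ as the distributional kernel of $L_{g,\tau}^{-1}$. The construction proceeds in two parts:
\begin{itemize}
\item approximate $\delta_y$ by smooth bumps $f_k\ge0$ and set $G_\tau(\cdot,y)=\lim_k L_{g,\tau}^{-1}f_k$;
\item take the limit in $W^{2,q'}$ for $q'<\frac n{n-1}$, by duality with $L^{\infty}$ estimates for the adjoint problem. The adjoint problem is of the same form because $L_{g,\tau}$ is formally self-adjoint.
\end{itemize}
The representation $v(x)=\int_M G_\tau(x,y)f(y)\,d\mu^g(y)$ then holds a.e., by density of smooth $f$ and continuity of $L_{g,\tau}^{-1}:L^q\to W^{2,q}$. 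When $q>\frac n2$, the solution $v$ is continuous and $G_\tau(x,\cdot)\in L^{q'}$ uniformly in $x$, with continuous dependence on $x$. This upgrades the identity to every $x\in M$.

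\textbf{Step 3: positivity.} We first show $G_\tau\ge0$ by the maximum principle. In the conformally changed picture, if $f\ge0$ and $v$ solves $L_{g,\tau}v=f$, we test against $\bar v^-$; the positive zeroth-order term forces $\bar v^-=0$. We then upgrade nonnegativity to strict positivity:
\begin{itemize}
\item $G_\tau(\cdot,y)$ is positive away from $y$ by the strong maximum principle (Harnack inequality for $W^{2,p}$-coefficient operators with $L^p$ potentials, $p>\frac n2$);
\item near the diagonal, $G_\tau(x,y)\gtrsim d(x,y)^{2-n}$ by comparison with the Euclidean fundamental solution in harmonic coordinates.
\end{itemize}
Finally, we combine these two facts into a uniform bound $m_{g,\tau}$. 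Off a neighborhood of the diagonal, $G_\tau$ is jointly continuous and positive on a compact set, so it has a positive minimum there. Near the diagonal, the singular lower bound dominates.

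The main obstacle is making this last step rigorous with low regularity. Joint continuity of $G_\tau$ off the diagonal and the near-diagonal asymptotics both require care when $\scal_g$ is only in $L^p$. The first thing we would try is the Harnack inequality of Trudinger type, applied on balls covering $M\setminus B(y,r)$, uniformly in $y$ by compactness.
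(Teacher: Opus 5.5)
Your skeleton is the paper's: nonnegativity, positivity away from the pole by a strong maximum principle, the blow-up $G_\tau(x,y)\gtrsim d(x,y)^{2-n}$ near the diagonal, and a compactness argument that produces a minimum. Your Step~1 is a correct addition, since it makes explicit that invertibility rests on the standing positive Yamabe assumption, which the lemma's statement leaves implicit.

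The gap is in the step that actually produces $m_{g,\tau}$. You assume that $G_\tau$ is jointly continuous on $\{(x,y) : d(x,y)\geq r_0\}$, and you flag this yourself as the unresolved obstacle. Strict positivity of each $G_\tau(\cdot,y)$ on $M\setminus\{y\}$ is a statement for fixed $y$ and gives nothing uniform in $y$. So, as written, no positive lower bound on $(M\times M)\setminus\Delta$ is obtained. The same unproved continuity is also what your upgrade of the representation formula to every $x$ tacitly uses.

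The paper fills exactly this hole. Take disjoint compact sets $K_1, K_2$; the upper bound $G_\tau(x,y)\lesssim d(x,y)^{2-n}$ from Kim--Sakellaris bounds $G_\tau$ on $K_1\times K_2$. Since $L_{g,\tau}G_\tau(x,\cdot)=0$ near $K_2$, interior estimates give $W^{2,p}$ bounds on $G_\tau(x,\cdot)|_{K_2}$ uniform in $x\in K_1$, hence uniform $C^{0,\alpha}$ bounds because $p>\frac n2$. The symmetry $G_\tau(x,y)=G_\tau(y,x)$ transfers the equicontinuity to the other variable, which gives joint continuity off the diagonal. Together with the near-diagonal lower bound, this makes $G_\tau$ proper on $(M\times M)\setminus\Delta$. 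Its infimum is therefore attained at some $(x_0,y_0)$, and the strong maximum principle applied to $G_\tau(x_0,\cdot)$ shows it is positive.

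Your Harnack suggestion could replace joint continuity altogether, but it has to be carried out. Trudinger's Harnack inequality allows the potential $\scal_g+\frac{n-1}{n}\tau^2\in L^p$, $p>\frac n2$, once $\sqrt{\det g}\,\Delta_g$ is written in divergence form. It holds on balls of a fixed small radius with constants independent of the center. Chain a bounded number of such balls inside $M\setminus B(y,r_0/2)$, starting from a point at distance $r_0$ from $y$ where the near-diagonal bound gives $G_\tau\gtrsim r_0^{2-n}$. This yields $G_\tau(x,y)\geq c>0$ for $d(x,y)\geq r_0$, uniformly in $y$.

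Two smaller corrections. First, the approximating solutions in Step~2 converge in $W^{1,s}$ for $s<\frac{n}{n-1}$, which is what duality with $L^\infty$ bounds for the adjoint gives. They do not converge in $W^{2,s}$, since $\nabla^2 G_\tau\sim d^{-n}$ is not integrable. Second, ``comparison with the Euclidean fundamental solution in harmonic coordinates'' is not an argument when the metric is only $C^{0,\alpha}$ and $\scal_g\in L^p$. The near-diagonal lower bound needs a genuine result; the paper cites Mourgoglou for it.
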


\begin{proof}
    The existence of a Green function $G_\tau$ for operators similar to $L_{g,
                \tau}$ is addressed in~\cite{KimSakellaris}, see also~\cite{AvalosCogoAbrego}.
    Note that, as $L_{g, \tau}$ is formally selfadjoint, $G_\tau$ is symmetric and,
    for each $x_0 \in M$, we have $L_{g, \tau} G_\tau(x_0, \cdot) = 0$ away from
    $x_0$. By elliptic regularity, for any $y_0 \in M \setminus \{x_0\}$ and any $r
        > 0$ such that $x_0 \not\in \Bbar_r(y_0)$, we have $G_\tau(x_0, \cdot) \in
        W^{2, p}(\Bbar_r(y_0), \bR)$.

    Let $K_1, K_2 \subset M$ be compact and disjoint. For each $x \in K_1$,
    $G_\tau(x, \cdot)$ satisfies $L_{g, \tau} G_\tau(x, \cdot) = 0$ on a
    neighborhood of $K_2$. From~\cite[Theorem 6.12]{KimSakellaris}, $G_\tau(x, y)
        \lesssim d_g(x, y)^{2-n}$, which gives a uniform upper bound on $G_\tau$ over
    $K_1 \times K_2$ since $d(K_1, K_2) > 0$. Interior elliptic estimates then
    yield a uniform $W^{2, p}$-bound on $G_\tau(x, \cdot)|_{K_2}$ for all $x \in
        K_1$. Since $p > n/2$, the Sobolev embedding $W^{2, p} \hookrightarrow C^{0,
                \alpha}$ promotes this to a uniform $C^{0, \alpha}$-bound, so the family
    $\{G_\tau(x, \cdot)\}_{x \in K_1}$ is equicontinuous on $K_2$. By the symmetry
    $G_\tau(x, y) = G_\tau(y, x)$, the family $\{G_\tau(\cdot, y)\}_{y \in K_2}$ is
    likewise equicontinuous on $K_1$. The joint continuity of $G_\tau$ on $K_1
        \times K_2$ follows, and since $K_1, K_2$ are arbitrary, $G_\tau$ is continuous
    on $(M \times M) \setminus \Delta$, where $\Delta = \{(x, x) : x \in M\}$ is
    the diagonal. From the lower bound $G_\tau(x, y) \geq c_n d_g(x, y)^{2-n}$ for
    $x, y$ sufficiently close in $M$, see~\cite[Lemma 6.4]{Mourgoglou}, we see that
    $G_\tau$ is proper on $(M \times M)\setminus \Delta$. As a consequence, there
    exists a point $(x_0, y_0) \in M \times M$ where $G_\tau$ reaches its minimum
    value $m_{g, \tau} = G_\tau(x_0, y_0)$.

    As the Green function $G_\tau$ is non-negative, we have $m_{g, \tau} \geq 0$.
    As $u: y \mapsto G_\tau(x_0, y)$ solves $L_{g, \tau} u = 0$ away from $x_0$, we
    see from the strong maximum principle (see~\cite[Theorem
        8.19]{GilbargTrudinger}) applied to $-u$ that $m_{g, \tau} > 0$.
\end{proof}

We can now construct a subsolution to the Lichnerowicz equation. Let $v \in
    W^{2, p}(M, \bR)$ be the unique solution to $L_{g, \tau} v = A^2$:
\begin{equation}\label{eqSubsolution}
    -\frac{4(n-1)}{n-2} \Delta v + \scal\,v + \frac{n-1}{n} \tau^2 v = A^2.
\end{equation}
From Lemma~\ref{lmGreenLgtau}, we have, for any $x \in M$,
\begin{equation}\label{eqEstimateInfV}
    v(x) = \int_M G_\tau(x, y) A^2(y) d\mu^g(y) \geq m_{g, \tau} \int_M A^2(y) d\mu^g(y) = m_{g, \tau} \|A\|_{L^2}^2.
\end{equation}

On the other hand, since $L_{g, \tau}: W^{2, p}(M, \bR) \to L^p(M, \bR)$ is
invertible and since the embedding $W^{2, p}(M, \bR) \hookrightarrow
    L^\infty(M, \bR)$ is continuous, there exists a constant $C_{g, \tau} > 0$
independent of $A \in L^{2p}(M, \bR)$ such that
\begin{equation}\label{eqEstimateSupV}
    \|v\|_{L^\infty} \leq C_{g, \tau} \|A\|_{L^{2p}}^2.
\end{equation}

We next construct a subsolution to the Lichnerowicz equation~\eqref{eqLich};
the maximum principle will then yield the lower bound for $\phi$, see
Proposition~\ref{propLowerBoundLich}. We note that the bound involves the ratio
$\omega$ of two Lebesgue norms of $A$, which will need to be controlled in the
proof of Theorem~\ref{thmMain}.

\begin{lemma}\label{lmSubsolution}
    Assume that $\|A\|_{L^{2p}} \le C_{g,\tau}^{-1/2}$. Then the function
    \[
        \phi_- \definedas \|v\|_{L^\infty}^{-\frac{N+1}{N+2}} v
    \]
    is a subsolution of~\eqref{eqLich}. Moreover, setting
    \[
        \omega \definedas \frac{\|A\|_{L^{2p}}}{\|A\|_{L^2}},
    \]
    there exists a constant $c_{g,\tau} > 0$ such that
    \[
        \phi_- \ge c_{g,\tau}\,
        \omega^{-\frac{3n-2}{2(n-1)}} \|A\|_{L^2}^{\frac{n-2}{2(n-1)}}.
    \]
\end{lemma}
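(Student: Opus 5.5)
Write $\lambda \definedas \|v\|_{L^\infty}^{-\frac{N+1}{N+2}}$, so that $\phi_- = \lambda v$. The plan is to check the subsolution inequality directly from \eqref{eqSubsolution}, and then combine \eqref{eqEstimateInfV} and \eqref{eqEstimateSupV} to get the lower bound.

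\emph{Subsolution property.} First replace the nonlinear absorption term by its linear counterpart. Since $\phi_- \le \lambda\|v\|_{L^\infty} = \|v\|_{L^\infty}^{\frac{1}{N+2}}$, we have
\[
\phi_-^{N-1} = \phi_-^{N-2}\,\phi_- \le \|v\|_{L^\infty}^{\frac{N-2}{N+2}}\,\phi_- .
\]
The hypothesis $\|A\|_{L^{2p}}\le C_{g,\tau}^{-1/2}$ together with \eqref{eqEstimateSupV} gives $\|v\|_{L^\infty}\le 1$, hence $\phi_-^{N-1}\le \phi_-$. Multiplying \eqref{eqSubsolution} by $\lambda$ and using $\tau^2\ge 0$ then gives
\[
-\frac{4(n-1)}{n-2}\Delta\phi_- + \scal\,\phi_- + \frac{n-1}{n}\tau^2\phi_-^{N-1} \le \lambda A^2 .
\]
It remains to show $\lambda A^2 \le A^2\phi_-^{-N-1}$, i.e.\ $\lambda\,\phi_-^{N+1}\le 1$. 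This follows from
\[
\lambda\,\phi_-^{N+1} \le \lambda \|v\|_{L^\infty}^{\frac{N+1}{N+2}} = 1 .
\]
The exponent $-\frac{N+1}{N+2}$ in $\lambda$ is chosen precisely so that this equality holds. Positivity of $\phi_-$, needed for the right-hand side of \eqref{eqLich} to make sense, comes from \eqref{eqEstimateInfV} (recall $A\not\equiv 0$). Since $v\in W^{2,p}$, the inequality holds in the strong a.e.\ sense.

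\emph{Lower bound.} By \eqref{eqEstimateInfV} and \eqref{eqEstimateSupV},
\[
\phi_- \ge m_{g,\tau}\|A\|_{L^2}^2\,\bigl(C_{g,\tau}\|A\|_{L^{2p}}^2\bigr)^{-\frac{N+1}{N+2}} .
\]
Substitute $\|A\|_{L^{2p}}=\omega\|A\|_{L^2}$ and compute the exponents from $N=\frac{2n}{n-2}$:
\[
\frac{N+1}{N+2} = \frac{3n-2}{4(n-1)} .
\]
This yields the factor $\omega^{-\frac{3n-2}{2(n-1)}}$, and the power of $\|A\|_{L^2}$ is
\[
2 - \frac{3n-2}{2(n-1)} = \frac{n-2}{2(n-1)} ,
\]
as claimed, with $c_{g,\tau} = m_{g,\tau}\,C_{g,\tau}^{-\frac{3n-2}{4(n-1)}}$.

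The only delicate step is reducing the term $\tau^2\phi_-^{N-1}$ to $\tau^2\phi_-$. This is exactly where the smallness assumption $\|v\|_{L^\infty}\le 1$ enters; everything else is bookkeeping of exponents.
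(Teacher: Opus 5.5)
Your proof is correct and follows essentially the same route as the paper. The two conditions you check, $\phi_- = \lambda v \le 1$ (so that $\tau^2\phi_-^{N-1}\le\tau^2\phi_-$) and $\lambda\phi_-^{N+1} = \lambda^{N+2}v^{N+1}\le 1$, are exactly the two sufficient conditions the paper extracts after multiplying the subsolution inequality by $\phi_-^{N+1}$, and your exponent computation reproduces the paper's constant $c_{g,\tau} = m_{g,\tau}C_{g,\tau}^{-\frac{N+1}{N+2}}$.
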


\begin{proof}
    Set $\lambda \definedas \|v\|_{L^\infty}^{-\frac{N+1}{N+2}}$ and $\phi_- \definedas \lambda v$.
    Using~\eqref{eqSubsolution}, we have
    \[
        -\frac{4(n-1)}{n-2}\Delta v+\scal\,v = A^2 - \frac{n-1}{n}\tau^2 v.
    \]
    The subsolution inequality is therefore equivalent to
    \[
        (\lambda^{N+2}v^{N+1}-1)A^2
        +\frac{n-1}{n}\tau^2\bigl(\lambda^{2N}v^{2N}-\lambda^{N+2}v^{N+2}\bigr)\le 0.
    \]
    It is sufficient to require
    \[
        \lambda^{N+2}v^{N+1}\le 1
        \quad\text{and}\quad
        \lambda^{2N}v^{2N}\le \lambda^{N+2}v^{N+2}.
    \]
    The second condition is equivalent to $\lambda v\le 1$ since $N>2$.

    With the choice $\lambda=\|v\|_{L^\infty}^{-\frac{N+1}{N+2}}$, one has
    \[
        \lambda^{N+2}\|v\|_{L^\infty}^{N+1}=1,
    \]
    hence $\lambda^{N+2}v^{N+1}\le 1$ everywhere. Moreover,
    \[
        \lambda\|v\|_{L^\infty} = \|v\|_{L^\infty}^{\frac{1}{N+2}}\le 1
    \]
    provided $\|v\|_\infty\le 1$, which holds if $\|A\|_{L^{2p}} \le
        C_{g,\tau}^{-1/2}$ by~\eqref{eqEstimateSupV}. This proves that $\phi_-=\lambda
        v$ is a subsolution.

    Finally, combining~\eqref{eqEstimateInfV} and~\eqref{eqEstimateSupV} yields
    \[
        \phi_- = \lambda v
        \ge
        \frac{m_{g,\tau}\|A\|_{L^2}^2}{\bigl(C_{g,\tau}\|A\|_{L^{2p}}^2\bigr)^{\frac{N+1}{N+2}}}
        =
        \frac{m_{g,\tau}}{C_{g,\tau}^{\frac{N+1}{N+2}}}\,
        \omega^{-\frac{3n-2}{2(n-1)}} \|A\|_{L^2}^{\frac{n-2}{2(n-1)}},
    \]
    which is the claimed bound with $c_{g,\tau} \definedas m_{g,\tau}
        C_{g,\tau}^{-\frac{N+1}{N+2}}$.
\end{proof}

We can now apply the maximum principle to obtain the following proposition:

\begin{proposition}\label{propLowerBoundLich}
    Under the assumptions of Lemma~\ref{lmSubsolution}, there exists a constant
    $\mu_{g,\tau} > 0$ such that, for any $A \in L^{2p}(M, \bR)$, $A \not\equiv 0$,
    the solution $\phi$ to the Lichnerowicz equation~\eqref{eqLich} satisfies,
    with $\omega$ as in Lemma~\ref{lmSubsolution},
    \[
        \phi \geq \mu_{g,\tau}\,
        \omega^{-\frac{3n-2}{2(n-1)}} \|A\|_{L^2}^{\frac{n-2}{2(n-1)}}.
    \]
\end{proposition}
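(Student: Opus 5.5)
The plan is to compare $\phi$ with the subsolution $\phi_-$ from Lemma~\ref{lmSubsolution} using a maximum-principle argument, and then read off the lower bound from that lemma with $\mu_{g,\tau} \definedas c_{g,\tau}$. So the whole task reduces to showing $\phi \geq \phi_-$ on $M$.

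Set $F(x,t) \definedas \frac{n-1}{n}\tau^2 t^{N-1} - A^2 t^{-N-1}$. This function is nondecreasing in $t>0$, so the Lichnerowicz operator $P(u) \definedas -\frac{4(n-1)}{n-2}\Delta u + \scal\, u + F(x,u)$ is monotone. Since $\scal$ need not be nonnegative, I would first pass to the conformal metric $\gbar = \psi^\kappa g$ with $\scal_{\gbar} \geq \mu > 0$, as in the proof of Proposition~\ref{propEstimateLichUpper}. The transformation law~\eqref{eqConfTrans} shows that sub- and solutions for $g$ correspond to sub- and solutions for $\gbar$ via $u \mapsto \psi^{-1}u$, with $A$ replaced by $\psi^{-N}A$. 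The order relation is preserved because $\psi>0$.

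Working with respect to $\gbar$, let $w \definedas \phibar_- - \phibar$ and test the difference of the two inequalities against $w_+ = \max(w,0)$, which lies in $W^{1,2}$. Both functions are in $W^{2,p}\subset C^0$, and $\phibar$ is bounded below by a positive constant, being a positive continuous solution. We obtain
\[
\frac{4(n-1)}{n-2}\int_M |dw_+|^2_{\gbar} + \int_M \scal_{\gbar}\, w_+^2 + \int_M \bigl(F(x,\phibar_-)-F(x,\phibar)\bigr) w_+ \leq 0.
\]
The last integrand is nonnegative on $\{w>0\}$ by monotonicity of $F$, and $\scal_{\gbar}\geq\mu>0$. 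Hence $w_+\equiv 0$, that is, $\phi\geq\phi_-$.

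The point needing care is that $\phi_-$ must be positive, so that $F(x,\phi_-)$ makes sense. This follows from~\eqref{eqEstimateInfV}, since $v \geq m_{g,\tau}\|A\|_{L^2}^2 > 0$ when $A\not\equiv 0$. The subsolution inequality for $\phi_-$ also holds in the weak/a.e.\ sense, as it is derived pointwise in Lemma~\ref{lmSubsolution} from~\eqref{eqSubsolution}. Once $\phi \geq \phi_-$ is established, Lemma~\ref{lmSubsolution} gives the claimed bound with $\mu_{g,\tau}=c_{g,\tau}$.
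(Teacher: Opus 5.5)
Your proof is correct, but it takes a different route from the paper's. Both arguments reduce the claim to $\phi\geq\phi_-$, with $\mu_{g,\tau}=c_{g,\tau}$.

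The paper works with $\psi=\log\phi$ and $\psi_-=\log\phi_-$. After dividing by $\phi$, the scalar curvature becomes a term independent of the unknown and cancels upon subtraction. The nonlinear differences are then linearized by the mean value formula, giving the linear inequality $-\frac{4(n-1)}{n-2}\bigl(\Delta w+\langle d(\psi+\psi_-),dw\rangle\bigr)+B^2w\geq 0$ for $w=\psi-\psi_-$ with $B^2\geq 0$. A maximum principle is applied to this inequality.

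You instead conformally change to $\gbar$ with $\scal_{\gbar}\geq\mu>0$. You then run a Stampacchia-type truncation argument with test function $(\phibar_--\phibar)_+$, using only that $t\mapsto\frac{n-1}{n}\tau^2t^{N-1}-\Abar^2t^{-N-1}$ is nondecreasing.

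The log trick exploits the stronger monotonicity of $t\mapsto\scal+\frac{n-1}{n}\tau^2t^{N-2}-A^2t^{-N-2}$. It therefore needs neither a sign on the scalar curvature nor a conformal change. The price is a drift term whose coefficient $d(\psi+\psi_-)$ is only in $W^{1,p}$. Also, on a closed manifold one must use $B^2\not\equiv 0$ (true since $A\not\equiv 0$) to rule out a negative constant minimum. The paper's citation of a maximum principle for domains leaves these details implicit.

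Your version uses the positive Yamabe hypothesis, which is already exploited in Proposition~\ref{propEstimateLichUpper}. In exchange it is fully self-contained at $W^{2,p}$ regularity. A single integration by parts suffices, and the coercivity from $\scal_{\gbar}\geq\mu$ gives $(\phibar_--\phibar)_+\equiv 0$ directly.
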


\begin{proof}
    In view of Lemma~\ref{lmSubsolution} it suffices to prove that the solution $\phi$ to the Lichnerowicz equation~\eqref{eqLich} satisfies $\phi \geq \phi_-$. This is done by means of the maximum principle. Let $\psi \definedas \log \phi$ (resp. $\psi_- \definedas \log \phi_-$). Then $\psi \in W^{2, p}(M, \bR)$ (resp. $\psi_- \in W^{2, p}(M, \bR)$) satisfies
    \begin{align*}
         & - \frac{4(n-1)}{n-2} \left(\Delta \psi + |d\psi|^2\right) + \scal + \frac{n-1}{n} \tau^2 e^{(N-2)\psi} = A^2 e^{-(N+2) \psi}                                        \\
         & \left(\text{resp.}\ - \frac{4(n-1)}{n-2} \left(\Delta \psi_- + |d\psi_-|^2\right) + \scal + \frac{n-1}{n} \tau^2 e^{(N-2)\psi_-} \leq A^2 e^{-(N+2) \psi_-}\right).
    \end{align*}
    Subtracting the equation for $\psi$ and the inequation for $\psi_-$, we get
    \begin{align*}
         & - \frac{4(n-1)}{n-2} \left(\Delta (\psi - \psi_-) + \< d(\psi+\psi_-), d(\psi-\psi_-)\>\right)                                              \\
         & \qquad\qquad + \frac{n-1}{n} \tau^2 \left(e^{(N-2)\psi} - e^{(N-2)\psi_-}\right) \geq A^2 \left(e^{-(N+2) \psi} - e^{-(N+2) \psi_-}\right).
    \end{align*}
    We write
    \begin{align*}
        e^{(N-2)\psi} - e^{(N-2)\psi_-}
         & = \int_{\psi_-}^\psi (N-2) e^{(N-2)t} dt                            \\
         & = (N-2) (\psi - \psi_-) \int_0^1 e^{(N-2)(s\psi + (1-s)\psi_-)} ds,
    \end{align*}
    where we made the change of variable $t = s \psi + (1-s) \psi_-$. And similarly,
    \begin{align*}
        e^{-(N+2)\psi} - e^{-(N+2)\psi_-}
         & = - \int_{\psi_-}^\psi (N+2) e^{-(N+2)t} dt                            \\
         & = - (N+2) (\psi - \psi_-) \int_0^1 e^{-(N+2)(s\psi + (1-s)\psi_-)} ds.
    \end{align*}
    All in all, the difference $\psi - \psi_-$ satisfies the following differential inequality:
    \[
        - \frac{4(n-1)}{n-2} \left(\Delta (\psi - \psi_-) + \< d(\psi+\psi_-), d(\psi-\psi_-)\>\right) + B^2 (\psi - \psi_-) \geq 0,
    \]
    with
    \begin{align*}
        B^2 & \definedas \frac{n-1}{n} \tau^2 (N-2) \int_0^1 e^{(N-2)(s\psi + (1-s)\psi_-)} ds \\
            & \qquad\qquad + (N+2) A^2 \int_0^1 e^{-(N+2)(s\psi + (1-s)\psi_-)} ds.
    \end{align*}
    From the maximum principle~\cite[Theorem 8.1]{GilbargTrudinger}, we conclude
    that $\phi \geq \phi_-$.
\end{proof}

%% file: vector.tex
In this short section, we collect basic facts about the vector
equation~\eqref{eqVector}. These facts have already appeared in many places in
the literature. We follow here the presentation of~\cite{GicquaudUniqueness}.
The following proposition is borrowed from~\cite[Proposition 3.1]{Pailleron}.

\begin{proposition}\label{propVector}
    Assume that $(M, g)$ has no non-trivial conformal Killing vector field, i.e.\
    no non-zero vector field $V$ such that $\bL V \equiv 0$. Then the operator
    $\Delta_{\bL}: W^{2, q}(M, TM) \to L^q(M, TM)$ is an isomorphism for all $q \in
        (1, p]$.
\end{proposition}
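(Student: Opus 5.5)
We will prove Proposition~\ref{propVector} by Fredholm theory: we show that $\Delta_{\bL}$ is a formally self-adjoint elliptic operator with $W^{2,p}$ coefficients, hence Fredholm of index zero between $W^{2,q}$ and $L^q$, and then check that its kernel is trivial.

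\textbf{Step 1 (structure of the operator).} In local coordinates,
\[
\Delta_{\bL} W = \Delta W + \left(1-\frac{2}{n}\right)\nabla \divg W + \mathrm{Ric}(W,\cdot).
\]
Its principal symbol is $\xi \mapsto |\xi|^2 \mathrm{Id} + \left(1-\frac 2n\right)\xi\otimes\xi$, which is positive definite, so the operator is strongly elliptic. Since $g \in W^{2,p}$ with $p > \frac n2$, the coefficients have the following regularity:
\begin{itemize}
\item the principal part involves $g \in W^{2,p}\subset C^{0,\alpha}$;
\item the first-order terms involve Christoffel symbols in $W^{1,p}\subset L^{n+\epsilon}$ or better;
\item the zeroth-order term (Ricci) lies in $L^p$.
\end{itemize}
By the standard multiplication rules, $\Delta_{\bL}: W^{2,q}\to L^q$ is then bounded for every $q\in(1,p]$.

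\textbf{Step 2 (Fredholm property).} We invoke the usual elliptic estimate for operators with rough coefficients, as in the framework used by Holst--Nagy--Tsogtgerel and Maxwell:
\[
\|W\|_{W^{2,q}}\lesssim \|\Delta_{\bL}W\|_{L^q}+\|W\|_{L^q},
\]
valid for $q\in(1,p]$. Combined with the compact embedding $W^{2,q}\hookrightarrow L^q$, this shows that $\Delta_{\bL}$ is semi-Fredholm. Since the operator is formally self-adjoint, the same argument applied to the dual exponent yields Fredholmness. Finally, the index is zero: it is invariant along the homotopy to the smooth-coefficient operator obtained by smoothing $g$, and a smooth self-adjoint elliptic operator has index zero.

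\textbf{Step 3 (trivial kernel).} Let $W\in W^{2,q}$ satisfy $\Delta_{\bL}W=0$.
\begin{itemize}
\item \emph{Regularity.} Bootstrap the elliptic estimate (the lower-order coefficients are controlled by Sobolev embeddings) to get $W\in W^{2,p}$; actually $W\in W^{1,2}$ is all that is needed for the next step.
\item \emph{Integration by parts.} Pair the equation with $W$:
\[
0=-\int_M\<\Delta_{\bL}W,W\>\,d\mu^g=\frac12\int_M|\bL W|^2\,d\mu^g .
\]
\item \emph{Conclusion.} Hence $\bL W\equiv0$, and the assumption of no non-trivial conformal Killing fields forces $W=0$.
\end{itemize}
Injectivity together with index zero gives surjectivity, and the bounded inverse theorem then makes $\Delta_{\bL}$ an isomorphism.

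\textbf{Main obstacle.} The delicate point is the low-regularity justification: the a priori $W^{2,q}$ estimate for $q$ possibly below the range where $W^{2,q}\subset W^{1,2}$ (that is, $q<\frac{2n}{n+2}$), and the bootstrap of kernel elements into $W^{1,2}$ so that the integration by parts is legitimate.

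For the estimate, I would first treat $q$ close to $p$ directly. For small $q$, I would argue by duality: $\Delta_{\bL}$ on $W^{2,q}$ is the adjoint of $\Delta_{\bL}$ on $W^{2,q'}$ in a suitable sense, and the isomorphism for large exponents, restricted to $q'\le p$, transfers to $q$.

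For the bootstrap, I would use the estimate from Step 2 iteratively, gaining integrability at each step through the Sobolev embeddings of the lower-order terms, until $W\in W^{1,2}$.
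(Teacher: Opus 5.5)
The paper gives no proof of its own (it imports the statement from Pailleron, and ultimately from Maxwell and Holst--Nagy--Tsogtgerel), and your outline is the standard argument behind that citation: show $\Delta_{\bL}$ is Fredholm of index zero using the rough-coefficient $L^q$ theory, then kill the kernel via $\frac12\int_M|\bL W|^2\,d\mu^g=0$ and the no-CKVF hypothesis. Two points need tightening. First, the ``dual exponent'' step in Step~2 fails for $q<p'$, since then $q'>p$ and $\Delta_{\bL}$ no longer maps $W^{2,q'}$ into $L^{q'}$ (for $n=3$, $\frac32<p<2$ this is every $q$); it is also unnecessary, because each operator along your smoothing homotopy is already semi-Fredholm by the a priori estimate and the index is locally constant on semi-Fredholm operators. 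Second, the kernel for small $q$ is handled most cleanly without any bootstrap: first prove the isomorphism at $q=p$, where $W^{2,p}\subset W^{1,2}$ makes the integration by parts immediate. Then for $W\in W^{2,q}$ with $\Delta_{\bL}W=0$ you have $\int_M\<W,\Delta_{\bL}Z\>\,d\mu^g=0$ for all $Z\in W^{2,p}$. This forces $W=0$, because $\Delta_{\bL}(W^{2,p})=L^p$ and $W^{2,q}\hookrightarrow L^{p'}$ for every $q>1$ when $p>\frac n2$.
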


Of particular importance in the next section will be the following constant
$\mu_V$:

\begin{lemma}\label{lmVector}
    Assume that $(M, g)$ has no non-trivial conformal Killing vector field. The constant $\mu_V$ defined by
    \begin{equation}\label{eqDefMuV}
        \mu_V \definedas \inf_{\substack{V \in W^{1, 2}(M, TM),\\ V \not\equiv 0}}
        \frac{\frac{1}{2}\int_M |\bL V|^2 d\mu^g}{\left(\int_M |V|^N d\mu^g\right)^{2/N}},
    \end{equation}
    is strictly positive.
\end{lemma}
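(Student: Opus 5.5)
The plan is to argue by contradiction, combining a Sobolev inequality for the conformal Killing operator with the absence of conformal Killing fields. Since $\mu_V \geq 0$ trivially, it suffices to rule out $\mu_V = 0$. Suppose then that there is a sequence $V_k \in W^{1,2}(M, TM)$, normalized by $\|V_k\|_{L^N} = 1$, with $\int_M |\bL V_k|^2 d\mu^g \to 0$.

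\emph{Step 1: a Korn-type inequality.} We show that, for some $C > 0$ and all $V \in W^{1,2}(M, TM)$,
\[
    \|V\|_{W^{1,2}} \leq C \left(\|\bL V\|_{L^2} + \|V\|_{L^2}\right).
\]
On a smooth metric this is the standard Gårding inequality for $\bL$, which is an overdetermined elliptic operator, i.e.\ its symbol is injective, for $n \geq 3$. With $g \in W^{2,p}$ and $p > n/2$, the metric is continuous and its Christoffel symbols lie in $W^{1,p} \subset L^{n+\epsilon}$ or at least $L^n$. The lower order terms are then absorbed in the usual way, using a partition of unity, freezing the coefficients, and the interpolation inequality $\|\Gamma V\|_{L^2} \leq \epsilon \|V\|_{W^{1,2}} + C_\epsilon \|V\|_{L^2}$.

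Alternatively, we can rely on Proposition~\ref{propVector}. Since $\Delta_\bL$ is an isomorphism $W^{2,q} \to L^q$ for $q \in (1, p]$, the operator is Fredholm with trivial kernel. By duality this gives the $W^{1,2}$ coercivity estimate through $\langle -\Delta_\bL V, V \rangle = \frac12 \|\bL V\|_{L^2}^2$, together with the compactness of $W^{1,2} \hookrightarrow L^2$.

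\emph{Step 2: compactness.} The normalization $\|V_k\|_{L^N} = 1$ and Hölder's inequality bound $\|V_k\|_{L^2}$. Step 1 then bounds $V_k$ in $W^{1,2}$. Extracting a subsequence, we get $V_k \rightharpoonup V$ weakly in $W^{1,2}$, and by Rellich $V_k \to V$ strongly in $L^2$. By lower semicontinuity, $\|\bL V\|_{L^2} \leq \liminf \|\bL V_k\|_{L^2} = 0$, so $\bL V \equiv 0$. The no conformal Killing field hypothesis then forces $V \equiv 0$.

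Applying Step 1 to $V_k$ gives $\|V_k\|_{W^{1,2}} \leq C(\|\bL V_k\|_{L^2} + \|V_k\|_{L^2}) \to 0$. The Sobolev embedding $W^{1,2} \hookrightarrow L^N$ then yields $\|V_k\|_{L^N} \to 0$. This contradicts $\|V_k\|_{L^N} = 1$.

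\emph{Main obstacle.} I expect the only delicate point to be Step 1 under the low regularity $g \in W^{2,p}$. The Rellich/Sobolev steps are routine, since the Sobolev constants depend only on the continuous metric. I would first try to deduce the estimate by the perturbation argument from the flat-space Korn inequality on small charts, where $g$ is close to the Euclidean metric in $C^0$. If this becomes messy, I would instead invoke the $W^{1,2}$ version of the isomorphism theory already cited in Proposition~\ref{propVector} from~\cite{Pailleron}.
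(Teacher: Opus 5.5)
The paper states this lemma without proof: it appears right after Proposition~\ref{propVector} and is used as a standard fact. There is therefore no argument of the authors to compare yours with. Your proof is the standard one, and it is correct.

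The normalization $\|V_k\|_{L^N}=1$ is allowed by homogeneity. The weak limit satisfies $\bL V=0$ because $\bL\colon W^{1,2}\to L^2$ is bounded and linear, hence weakly continuous. Boundedness holds since the zeroth-order part $\Gamma V$ lies in $L^2$: $\Gamma\in W^{1,p}\subset L^s$ with $s>n$, and $V\in L^N$. The hypothesis on conformal Killing fields is formulated precisely for $W^{1,2}$ fields, so $V=0$. Step~1 followed by the Sobolev embedding then upgrades $V_k\to 0$ in $L^2$ to $V_k\to 0$ in $L^N$, which contradicts the normalization.

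Two remarks on Step~1. First, you can avoid freezing coefficients altogether. For smooth $V$, integration by parts and the Ricci identity (valid for $g\in W^{2,p}$, where $\mathrm{Ric}\in L^p$) give
\[
    \frac12\int_M|\bL V|^2\,d\mu^g=\int_M|\nabla V|^2\,d\mu^g+\left(1-\frac2n\right)\int_M(\divg V)^2\,d\mu^g-\int_M\mathrm{Ric}(V,V)\,d\mu^g ,
\]
and this extends to $V\in W^{1,2}$ by density. Since $p>\frac n2$, the last term is bounded by $\|\mathrm{Ric}\|_{L^p}\|V\|_{L^{2p/(p-1)}}^2$ with $2\le \frac{2p}{p-1}<N$. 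Interpolating between $L^2$ and $L^N$ and using the Sobolev inequality, you can absorb it and obtain $\|V\|_{W^{1,2}}\le C(\|\bL V\|_{L^2}+\|V\|_{L^2})$ in a few lines.

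Second, your \emph{alternative} route through Proposition~\ref{propVector} is only asserted. An isomorphism $W^{2,q}\to L^q$ does not by itself yield a $W^{1,2}$ bound in terms of $\|\bL V\|_{L^2}$. You would need a further duality and interpolation argument to reach an isomorphism $W^{1,2}\to W^{-1,2}$. You should not lean on that route, and you do not need it.
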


%% file: VolumeBound.tex
In~\cite{GicquaudUniqueness} it was shown that, given a maximal allowable
volume $V_{\max}$ below an explicit threshold, the size of a solution
$(\phi,W)$ is explicitly controlled by the size of the TT-tensor $\sigma$. This
can be understood as a gap phenomenon: there cannot exist a solution whose
volume is bounded by $V_{\max}$ while $\|\phi^N\|_{L^{\frac{N}{2}+1}}$ is too
large compared to $\|\sigma\|_{L^2}$. The next proposition is a quantitative
version of this fact.

\begin{proposition}\label{propVolumeBound}
	Let $\mu_L$ and $\mu_V$ be the constants defined in~\eqref{eqDefMuL}
	and~\eqref{eqDefMuV}. Let $V_{\max}>0$ be such that
	\[
		\frac{2}{\mu_V}\left(\frac{n-1}{n}\right)^2 \|d\tau\|_{L^n}^2\, V_{\max}^{2/n}
		< \mu_L.
	\]
	Let $(\phi,W)$ be a solution of the conformal constraint
	equations~\eqref{eqConstraints} such that $V(\phi,W)\leq V_{\max}$. Then
	\begin{equation}\label{eqVolumeBound}
		\left(
		\mu_L
		-\frac{2}{\mu_V}\left(\frac{n-1}{n}\right)^2 \|d\tau\|_{L^n}^2\, V_{\max}^{2/n}
		\right)
		\|\phi^N\|_{L^{\frac N2+1}}^{2\frac{n-1}{n}}
		\leq\|\sigma\|_{L^2}^2.
	\end{equation}
\end{proposition}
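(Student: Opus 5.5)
We combine the Lichnerowicz estimate \eqref{eqDefMuL} with $A = |\sigma + \bL W|$ and an estimate of $\bL W$ obtained from the vector equation. The key structural fact is that $\sigma$ is TT while $\bL W$ is ``orthogonal'' to TT tensors. Indeed, since $\divg \sigma = 0$ and $\tr \sigma = 0$, integrating by parts gives $\int_M \<\sigma, \bL W\> d\mu^g = -2\int_M \<\divg\sigma, W\> d\mu^g = 0$. Hence
\[
\|\sigma + \bL W\|_{L^2}^2 = \|\sigma\|_{L^2}^2 + \|\bL W\|_{L^2}^2 .
\]
By \eqref{eqDefMuL} we get
\[
\mu_L \|\phi^N\|_{L^{\frac N2+1}}^{2\frac{n-1}{n}} \le \|\sigma\|_{L^2}^2 + \|\bL W\|_{L^2}^2 ,
\]
so it remains to bound $\|\bL W\|_{L^2}^2$ by $\frac{2}{\mu_V}\left(\frac{n-1}{n}\right)^2\|d\tau\|_{L^n}^2 V_{\max}^{2/n}\|\phi^N\|_{L^{\frac N2+1}}^{2\frac{n-1}{n}}$. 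The regularity $W\in W^{2,p}$ justifies all the integrations by parts.

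For the vector equation, pair \eqref{eqVector} with $W$ and integrate. Since $\Delta_{\bL} = -\frac12 \bL^*\bL$,
\[
\frac12\int_M |\bL W|^2 d\mu^g = -\frac{n-1}{n}\int_M \phi^N \<d\tau, W\> d\mu^g \le \frac{n-1}{n}\|d\tau\|_{L^n}\,\|\phi^N\|_{L^{s}}\,\|W\|_{L^N},
\]
by Hölder with $\frac1n + \frac1s + \frac1N = 1$, that is $\frac1s = \frac{n+2}{2n}$. By the definition \eqref{eqDefMuV} of $\mu_V$, $\|W\|_{L^N} \le \left(\frac{1}{2\mu_V}\int_M|\bL W|^2 d\mu^g\right)^{1/2}$. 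Substituting and squaring gives
\[
\|\bL W\|_{L^2}^2 \le \frac{2}{\mu_V}\left(\frac{n-1}{n}\right)^2\|d\tau\|_{L^n}^2\|\phi^N\|_{L^s}^2 .
\]
The constant matches the statement exactly, which confirms this route.

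The remaining step, and the only delicate point, is to interpolate $\|\phi^N\|_{L^s}$ between $\|\phi^N\|_{L^1} = V \le V_{\max}$ and $\|\phi^N\|_{L^{\frac N2+1}}$. Note that $\frac N2 + 1 = \frac{2(n-1)}{n-2}$. Writing $\frac1s = \frac{\theta}{1} + \frac{1-\theta}{\frac N2+1}$ with $\frac1s = \frac{n+2}{2n}$ and $\frac{1}{\frac N2+1} = \frac{n-2}{2(n-1)}$, we solve for $\theta$:
\[
\theta\left(1 - \frac{n-2}{2(n-1)}\right) = \frac{n+2}{2n} - \frac{n-2}{2(n-1)},
\]
that is $\theta\,\frac{n}{2(n-1)} = \frac{2}{2n(n-1)}$, so $\theta = \frac{2}{n^2}$. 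This $\theta$ does not give $V^{1/n}$, so the naive exponent bookkeeping must be redone. We need
\[
\|\phi^N\|_{L^s}^2 \le V^{2/n}\,\|\phi^N\|_{L^{\frac N2+1}}^{2\frac{n-1}{n}},
\]
i.e.\ $\|\phi^N\|_{L^s} \le V^{1/n}\|\phi^N\|_{L^{\frac N2+1}}^{\frac{n-1}{n}}$. This is a homogeneity-inconsistent interpolation in the norm of $\phi^N$: the total degree on the right is $\frac1n + \frac{n-1}{n} = 1$, which is consistent, but the exponent pair differs from the Riesz–Thorin weights. I therefore expect to apply Hölder directly to $\int \phi^{Ns} = \int \phi^{Na}\phi^{Nb}$ with $a+b = s$, putting $\phi^{Na}$ into $L^{1/a}$ and $\phi^{Nb}$ into the conjugate space matched to $L^{\frac N2+1}$. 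I would then solve for $a,b$ to produce the exponents $\frac1n$ and $\frac{n-1}{n}$, fixing the arithmetic there.

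This exponent matching is the step I expect to be the main obstacle. If it fails exactly, I would follow \cite{GicquaudUniqueness}'s bookkeeping, using $L^{\frac{2n}{n+2}}$ and the physical-volume Hölder bound. With this bound in hand, combining the three displays and moving the $\bL W$ term to the left yields \eqref{eqVolumeBound}. The smallness hypothesis on $V_{\max}$ only ensures that the resulting coefficient is positive.
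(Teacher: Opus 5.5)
Your overall route is the paper's: the $L^2$-orthogonality $\|\sigma+\bL W\|_{L^2}^2=\|\sigma\|_{L^2}^2+\|\bL W\|_{L^2}^2$, the estimate \eqref{eqDefMuL}, pairing the vector equation with $W$, H\"older, the definition of $\mu_V$, and interpolating $\phi^N$ between $L^1$ and $L^{\frac N2+1}$. As written, however, the argument fails at the H\"older step, and the interpolation you flag as the ``main obstacle'' is left unresolved. The condition $\frac1n+\frac1s+\frac1N=1$ gives
\[
\frac1s = 1-\frac1n-\frac{n-2}{2n}=\frac12 ,
\]
not $\frac{n+2}{2n}$. You dropped the $\frac1n$: $\frac{n+2}{2n}$ is just the conjugate exponent of $N$. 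With your $s=\frac{2n}{n+2}$ the three exponents sum to $1+\frac1n>1$. So the bound $\int_M \phi^N|d\tau|\,|W|\,d\mu^g\le \|d\tau\|_{L^n}\|\phi^N\|_{L^s}\|W\|_{L^N}$ is false in general; indicator functions of small sets violate it. Even with your $s$, the interpolation weight would be $\theta=\frac{3n-2}{n^2}$, not $\frac{2}{n^2}$.

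With the correct exponent $s=2$ there is no obstacle. Solving $\frac12=\theta+(1-\theta)\frac{n-2}{2(n-1)}$ gives $\theta=\frac1n$, i.e.
\[
\frac12=\frac1n\cdot 1+\frac{n-1}{n}\cdot\frac{1}{\frac N2+1},
\]
so H\"older interpolation yields
\[
\|\phi^N\|_{L^2}\le \|\phi^N\|_{L^1}^{1/n}\,\|\phi^N\|_{L^{\frac N2+1}}^{\frac{n-1}{n}}\le V_{\max}^{1/n}\,\|\phi^N\|_{L^{\frac N2+1}}^{\frac{n-1}{n}} .
\]
This is exactly the estimate you were looking for. Squaring it and inserting it into your bound on $\|\bL W\|_{L^2}^2$, then combining with \eqref{eqDefMuL}, finishes the proof as in the paper. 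Your fallbacks (splitting $\phi^{Ns}=\phi^{Na}\phi^{Nb}$, or borrowing the bookkeeping of \cite{GicquaudUniqueness}) are unnecessary. With $s=\frac{2n}{n+2}$ they could not help anyway, since the H\"older step preceding them is already invalid.
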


\begin{proof}
	We first estimate $\|\bL W\|_{L^2}^2$ in terms of $\|\phi^N\|_{L^{\frac N2+1}}$
	and $\|d\tau\|_{L^n}$. Multiply the vector equation~\eqref{eqVector} by $W$,
	integrate over $M$, and use the identity
	\[
		\int_M \langle \Delta_{\bL}W, W\rangle\,d\mu^g
		= \frac{1}{2} \int_M |\bL W|^2\,d\mu^g
	\]
	which follows by integration by parts. We obtain
	\begin{align}
		\frac{1}{2} \int_M |\bL W|^2\,d\mu^g
		 & = -\frac{n-1}{n}\int_M \phi^N \langle d\tau, W\rangle\,d\mu^g \nonumber \\
		 & \leq\frac{n-1}{n}\|\phi^N\|_{L^2}\,\|d\tau\|_{L^n}\,\|W\|_{L^N}.
		\label{eqLW2_estimate}
	\end{align}

	Next we interpolate the $L^2$-norm of $\phi^N$ between $L^1$ and $L^{\frac
				{N}{2}+1}$. Since $N=\frac{2n}{n-2}$ one checks that
	\[
		\frac{1}{2} = \left(1-\frac{1}{n}\right)\frac{1}{\frac{N}{2}+1} + \frac{1}{n}\cdot 1.
	\]
	Hence, by H\"older interpolation,
	\[
		\|\phi^N\|_{L^2}
		\le
		\|\phi^N\|_{L^1}^{1/n}\,
		\|\phi^N\|_{L^{\frac N2+1}}^{(n-1)/n}
		\le
		V_{\max}^{1/n}\,
		\|\phi^N\|_{L^{\frac N2+1}}^{(n-1)/n},
	\]
	where we used $\|\phi^N\|_{L^1} = V(\phi,W) \leq V_{\max}$.

	Insert this bound into~\eqref{eqLW2_estimate} to get
	\begin{equation}\label{eqLW2_estimate2}
		\frac12\int_M |\bL W|^2\,d\mu^g
		\le
		\frac{n-1}{n}V_{\max}^{1/n}\,
		\|\phi^N\|_{L^{\frac N2+1}}^{(n-1)/n}\,
		\|d\tau\|_{L^n}\,
		\|W\|_{L^N}.
	\end{equation}

	We now use the definition of $\mu_V$ in~\eqref{eqDefMuV}, which yields
	\[
		\|W\|_{L^N}^2
		\leq\frac{1}{2\mu_V}\int_M |\bL W|^2\,d\mu^g
		\qquad\Longrightarrow\qquad
		\|W\|_{L^N}
		\leq\left(\frac{1}{2\mu_V}\int_M |\bL W|^2\,d\mu^g\right)^{1/2}.
	\]
	Plugging this into~\eqref{eqLW2_estimate2} and rearranging gives
	\[
		\int_M |\bL W|^2\,d\mu^g
		\le
		\frac{2}{\mu_V}\left(\frac{n-1}{n}\right)^2
		V_{\max}^{2/n}\,
		\|\phi^N\|_{L^{\frac N2+1}}^{2\frac{n-1}{n}}\,
		\|d\tau\|_{L^n}^2.
	\]
	This is the desired bound on $\|\bL W\|_{L^2}^2$. We conclude by combining this
	bound with~\eqref{eqDefMuL}. Recall that $A=|\sigma+\bL W|$. Since $\sigma$ is
	$L^2$-orthogonal to $\bL W$, York's decomposition yields
	\[
		\|A\|_{L^2}^2
		=\int_M |\sigma+\bL W|^2\,d\mu^g
		=\|\sigma\|_{L^2}^2 + \|\bL W\|_{L^2}^2.
	\]
	Therefore
	\[
		\|A\|_{L^2}^2
		\le
		\|\sigma\|_{L^2}^2
		+
		\frac{2}{\mu_V}\left(\frac{n-1}{n}\right)^2
		V_{\max}^{2/n}\,
		\|\phi^N\|_{L^{\frac N2+1}}^{2\frac{n-1}{n}}\,
		\|d\tau\|_{L^n}^2.
	\]
	On the other hand, by~\eqref{eqDefMuL},
	\[
		\mu_L\|\phi^N\|_{L^{\frac N2+1}}^{2\frac{n-1}{n}}
		\leq\|A\|_{L^2}^2.
	\]
	Combining the last two inequalities yields
	\[
		\left(
		\mu_L
		-\frac{2}{\mu_V}\left(\frac{n-1}{n}\right)^2
		V_{\max}^{2/n}\,
		\|d\tau\|_{L^n}^2
		\right)
		\|\phi^N\|_{L^{\frac N2+1}}^{2\frac{n-1}{n}}
		\leq\|\sigma\|_{L^2}^2,
	\]
	which concludes the proof of the proposition.
\end{proof}

%% file: uniqueness.tex
Let $r \definedas 2p N$. We define a map $\Phi: L^r(M, \bR_+) \to L^r(M,
	\bR_+)$ as follows (where we denote by $L^r(M, \bR_+)$ the set of non-negative
functions $\phi \in L^r(M, \bR)$). Given $\phi \in L^r(M, \bR)$, we let $W =
	\mathrm{Vect}(\phi)$ denote the unique solution to the vector
equation~\eqref{eqVector}:
\[
	\Delta_{\bL} W = \frac{n-1}{n} \phi^N d\tau.
\]
By elliptic regularity (Proposition~\ref{propVector}), we have
\[
	\|W\|_{W^{2, q}} \lesssim \left\|\phi^N d\tau\right\|_{L^q} \leq \|\phi^N\|_{L^{2p}} \|d\tau\|_{L^n},
\]
with $q \in (1, \infty)$ given by
\[
	\frac{1}{q} = \frac{1}{2p} + \frac{1}{n} < \frac{2}{n}.
\]
From the Sobolev embedding theorem, we conclude that
\[
	\|\bL W\|_{L^{2p}} \lesssim \|\phi^N\|_{L^{2p}} \|d\tau\|_{L^n} = \|\phi\|_{L^r}^N \|d\tau\|_{L^n}.
\]
Next, given $W$ such that $\bL W \in L^{2p}(M, TM)$, we let $\phi' =
	\mathrm{Lich}(W)$ be the unique solution to the Lichnerowicz
equation~\eqref{eqLichnerowicz}:
\[
	-\frac{4(n-1)}{n-2} \Delta \phi' + \scal \phi' + \frac{n-1}{n} \tau^2 (\phi')^{N-1} = \frac{|\sigma + \bL W|^2}{(\phi')^{N+1}}.
\]
From Proposition~\ref{propEstimateLichUpper}, we have $\phi' \in L^r(M,
	\bR_+)$.

As a consequence, we have defined a mapping $\Phi: L^r(M, \bR_+) \to L^r(M,
	\bR_+)$ given by the composition
\[
	\Phi(\phi) = (\mathrm{Lich} \circ \bL \circ \mathrm{Vect})(\phi).
\]
Our first step is to show that $\Phi$ maps a suitable set into itself.

Recall the constant $\delta$ from Proposition~\ref{propVolumeBound}:
\[
	\delta \definedas \mu_L - \frac{2}{\mu_V}\left(\frac{n-1}{n}\right)^2 \|d\tau\|_{L^n}^2 V_{\max}^{2/n} > 0.
\]
And define
\begin{equation}\label{eqDefR}
	R \definedas R(\sigma) > 0 \quad\text{by}\quad
	R^{2\frac{n-1}{n}} = \frac{1}{\delta} \|\sigma\|_{L^2}^2.
\end{equation}
This is motivated by Proposition~\ref{propVolumeBound}: $R$ is precisely the
value at which $\|\phi^N\|_{L^{\frac{N}{2}+1}}$ saturates
inequality~\eqref{eqVolumeBound}. We set
\[
	\Omega_0 \definedas \left\{\phi \in L^r(M, \bR_+) \mid
	\|\phi^N\|_{L^{\frac{N}{2}+1}} \leq R\right\}.
\]

\begin{proposition}\label{propStability}
	Let $q = \frac{2n(n-1)}{3n-2}$ be such that $\frac{1}{q} = \frac{1}{n} +
		\frac{1}{2(n-1)}$. If
	\[
		R \leq \left(\frac{\|d\tau\|_{L^n}}{\|d\tau\|_{L^q}}\right)^n V_{\max},
	\]
	then the set $\Omega_0$ is stable under $\Phi$.
\end{proposition}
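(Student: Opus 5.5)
The plan is to take $\phi \in \Omega_0$, set $W = \mathrm{Vect}(\phi)$ and $\phi' = \Phi(\phi) = \mathrm{Lich}(W)$, and show $\|\phi'^N\|_{L^{\frac N2+1}} \leq R$. The argument mirrors the proof of Proposition~\ref{propVolumeBound}. The one change is that the volume bound on $\phi$, which is not available for an arbitrary element of $\Omega_0$, is replaced by the a priori bound $\|\phi^N\|_{L^{\frac N2+1}} \leq R$ together with a different H\"older splitting on $d\tau$. We already know that $\phi'$ is non-negative and lies in $L^r$ by Proposition~\ref{propEstimateLichUpper}, so only the norm bound is at stake.

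\emph{First step: bound $\bL W$.} Test the vector equation against $W$ and integrate by parts, as in~\eqref{eqLW2_estimate}:
\[
\frac12 \int_M |\bL W|^2\, d\mu^g \leq \frac{n-1}{n}\int_M \phi^N |d\tau|\,|W|\, d\mu^g .
\]
Now apply H\"older with the three exponents $\frac N2+1$, $q$ and $N$. These are admissible because
\[
\frac{n-2}{2(n-1)} + \frac1n + \frac{1}{2(n-1)} + \frac{n-2}{2n} = 1,
\]
which is exactly why $q$ is defined as $\frac{2n(n-1)}{3n-2}$. This gives
\[
\frac12\|\bL W\|_{L^2}^2 \leq \frac{n-1}{n}\, R\, \|d\tau\|_{L^q}\,\|W\|_{L^N}.
\]
Bounding $\|W\|_{L^N}$ through $\mu_V$ (Lemma~\ref{lmVector}) and absorbing the factor of $\|\bL W\|_{L^2}$ then yields
\[
\|\bL W\|_{L^2}^2 \leq \frac{2}{\mu_V}\left(\frac{n-1}{n}\right)^2 \|d\tau\|_{L^q}^2 R^2 .
\]

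\emph{Second step: bound $\phi'$.} Apply~\eqref{eqDefMuL} to $\phi'$, with $A = |\sigma + \bL W|$. Since $\sigma$ is TT, integration by parts gives the $L^2$-orthogonality of $\sigma$ and $\bL W$, so
\[
\mu_L \|\phi'^N\|_{L^{\frac N2+1}}^{2\frac{n-1}{n}} \leq \|\sigma\|_{L^2}^2 + \|\bL W\|_{L^2}^2 .
\]
By~\eqref{eqDefR}, $\|\sigma\|_{L^2}^2 = \delta R^{2\frac{n-1}{n}}$. It therefore suffices to show
\[
\|\bL W\|_{L^2}^2 \leq (\mu_L - \delta) R^{2\frac{n-1}{n}} = \frac{2}{\mu_V}\left(\frac{n-1}{n}\right)^2\|d\tau\|_{L^n}^2 V_{\max}^{2/n} R^{2\frac{n-1}{n}} .
\]
Comparing with the first step, this reduces to
\[
\|d\tau\|_{L^q}^2 R^{2/n} \leq \|d\tau\|_{L^n}^2 V_{\max}^{2/n},
\]
which is precisely the hypothesis $R \leq \left(\|d\tau\|_{L^n}/\|d\tau\|_{L^q}\right)^n V_{\max}$. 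Summing, we get $\mu_L\|\phi'^N\|^{2\frac{n-1}{n}}_{L^{\frac N2+1}} \leq \mu_L R^{2\frac{n-1}{n}}$, hence $\phi' \in \Omega_0$.

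\emph{Degenerate case and main obstacle.} If $d\tau \equiv 0$, then $W = 0$ and the claim follows directly from~\eqref{eqDefMuL}. I would treat this case separately, since the ratio in the hypothesis is then undefined. The only real subtlety is identifying the right H\"older triple: one must use the $L^{\frac N2+1}$ control on $\phi^N$ that $\Omega_0$ provides instead of the volume, and checking that the exponents close up is what dictates the choice of $q$. Everything else is bookkeeping with the constants $\mu_L$, $\mu_V$ and $\delta$.
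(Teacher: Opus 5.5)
Your proposal is correct and is essentially the paper's proof. You apply H\"older with the triple $\left(\frac N2+1, q, N\right)$ and use the $\mu_V$ coercivity to get $\|\bL W\|_{L^2}^2 \leq \frac{2}{\mu_V}\left(\frac{n-1}{n}\right)^2\|d\tau\|_{L^q}^2R^2$; then \eqref{eqDefMuL}, York orthogonality and the definition~\eqref{eqDefR} of $R$ reduce stability to exactly the stated hypothesis, as in the paper. Your separate treatment of $d\tau\equiv 0$, where the ratio in the hypothesis is undefined, is a small point the paper leaves implicit.
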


\begin{proof}
	Let $\phi \in \Omega_0$. We set $W = \mathrm{Vect}(\phi)$ and $\phi' =
		\mathrm{Lich}(W)$. We first estimate the $L^2$-norm of $\bL W$ as
	follows. From the definition of the constant $\mu_V$, we have
	\[
		\mu_V \|W\|_{L^N}^2 \leq \frac{1}{2} \int_M |\bL W|^2 d\mu^g.
	\]

	We multiply the vector equation by $W$ and integrate over $M$ to get
	\begin{align*}
		\frac{1}{2} \int_M |\bL W|^2 d\mu^g
		 & = - \frac{n-1}{n} \int_M \phi^N \< d\tau, W\> d\mu^g                                                                                              \\
		 & \leq \frac{n-1}{n} \left\|\phi^N\right\|_{L^{\frac{N}{2}+1}} \|W\|_{L^N} \|d\tau\|_{L^q}                                                          \\
		 & \leq \frac{n-1}{n} \left\|\phi^N\right\|_{L^{\frac{N}{2}+1}} \left(\frac{1}{2\mu_V} \int_M |\bL W|^2 d\mu^g\right)^{\frac{1}{2}} \|d\tau\|_{L^q},
	\end{align*}
	where we have used H\"older's inequality to pass from the first line to the
	second one. As a consequence, we conclude that
	\[
		\frac{\mu_V}{2} \int_M |\bL W|^2 d\mu^g \leq \left(\frac{n-1}{n}\right)^2 \|d\tau\|_{L^q}^2 R^2,
	\]
	where we have estimated $\|\phi^N\|_{L^{\frac{N}{2}+1}}$ from above by $R$.
	Next, from the estimate~\eqref{eqDefMuL}, we have
	\begin{align*}
		\mu_L \left\|(\phi')^N\right\|_{L^{\frac{N}{2}+1}}^{2 \frac{n-1}{n}}
		 & \leq \int_M |\sigma + \bL W|^2 d\mu^g                                                               \\
		 & \leq \int_M |\sigma|^2 d\mu^g + \int_M |\bL W|^2 d\mu^g                                             \\
		 & \leq \int_M |\sigma|^2 d\mu^g + \frac{2}{\mu_V} \left(\frac{n-1}{n}\right)^2 \|d\tau\|_{L^q}^2 R^2.
	\end{align*}
	So $\phi' \in \Omega_0$ as soon as
	\[
		\|\sigma\|_{L^2}^2 + \frac{2}{\mu_V} \left(\frac{n-1}{n}\right)^2 \|d\tau\|_{L^q}^2 R^2 \leq \mu_L R^{2 \frac{n-1}{n}}.
	\]
	Substituting the definition~\eqref{eqDefR} of $R$, this condition becomes
	\[
		\frac{2}{\mu_V} \left(\frac{n-1}{n}\right)^2 \|d\tau\|_{L^q}^2 R^{\frac{2}{n}}
		\leq \mu_L - \delta
		= \frac{2}{\mu_V}\left(\frac{n-1}{n}\right)^2 \|d\tau\|_{L^n}^2 V_{\max}^{2/n},
	\]
	which is exactly the assumption $R \leq (\|d\tau\|_{L^n}/\|d\tau\|_{L^q})^n
		V_{\max}$.
\end{proof}

We next prove that, after repeated applications of $\Phi$, the set $\Omega_0$
is mapped into a bounded subset of $L^r(M, \bR)$.

\begin{proposition}\label{propBootstrap}
	Assume that $\|\sigma\|_{L^{2p}} \leq 1$. There exist an integer $K \geq 0$
	depending only on $n$, and a constant $C > 0$ depending on $(M, g)$, $p$,
	$V_{\max}$, $\|d\tau\|_{L^n}$, and $\|\sigma\|_{L^{2p}}$, such that for all
	$\phi \in \Phi^K\left(\Omega_0\right)$, where $\Phi^K$ is the $K$-th iterate of $\Phi$,
	\[
		\left\|\phi^N\right\|_{L^{2p}} \leq C \|\sigma\|_{L^{2p}}^{\frac{n}{n-1}}.
	\]
	In particular, there exists a constant $C' > 0$ such that
	\[
		\|\bL W\|_{L^{2p}} \leq C' \|\sigma\|_{L^{2p}}^{\frac{n}{n-1}}
	\]
	for all $W = \mathrm{Vect}(\phi)$, with $\phi \in \Phi^K(\Omega_0)$.
\end{proposition}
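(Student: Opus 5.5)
We bootstrap integrability through finitely many applications of $\Phi$, alternating the regularity gain of the vector equation (Proposition~\ref{propVector}) with the gain of the Lichnerowicz estimate (Proposition~\ref{propEstimateLichUpper}).

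\textbf{Initial bound.} For $\phi \in \Omega_0$ we have $\|\phi^N\|_{L^{\frac N2+1}} \le R$. By~\eqref{eqDefR}, $R = \left(\delta^{-1}\|\sigma\|_{L^2}^2\right)^{\frac{n}{2(n-1)}} \lesssim \|\sigma\|_{L^{2p}}^{\frac{n}{n-1}}$. So the claimed bound holds at level $0$, with the weaker exponent $s_0 = \frac N2+1$ in place of $2p$.

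\textbf{Inductive step.} Assume $\|\phi^N\|_{L^{s_k}} \le C_k \|\sigma\|_{L^{2p}}^{\frac{n}{n-1}}$ for all $\phi \in \Phi^k(\Omega_0)$, and set $W = \mathrm{Vect}(\phi)$. Proposition~\ref{propVector}, with $\frac1q = \frac1{s_k}+\frac1n$ (we may take $q \le p$ after capping $s_k$), gives $\|W\|_{W^{2,q}} \lesssim \|\phi^N\|_{L^{s_k}} \|d\tau\|_{L^n}$. Sobolev embedding then yields
\[
\|\bL W\|_{L^{t_k}} \lesssim \|\phi^N\|_{L^{s_k}}, \qquad \frac1{t_k} = \frac1{s_k},
\]
that is, $W^{1,q}\hookrightarrow L^{s_k}$; when $q \ge n$ we may take any finite exponent. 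Thus $A = |\sigma + \bL W|$ lies in $L^{a_k}$ with $a_k = \min(s_k, 2p)$, and
\[
\|A\|_{L^{a_k}} \lesssim \|\sigma\|_{L^{2p}} + \|\sigma\|_{L^{2p}}^{\frac{n}{n-1}} \lesssim \|\sigma\|_{L^{2p}},
\]
using $\|\sigma\|_{L^{2p}}\le 1$ and $\frac{n}{n-1}>1$.

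Proposition~\ref{propEstimateLichUpper} applied with $q = a_k$ then gives
\[
\|\phi'\|_{L^{r'}}^{N+2} \lesssim \|A\|_{L^{a_k}}^2, \qquad \frac1{a_k} = \frac1n + \frac{2(n-1)}{n-2}\frac1{r'}
\]
when $a_k < n$, and any $r'$ when $a_k \ge n$. Since $\frac{2}{N+2} = \frac{n-2}{2(n-1)}$, this reads
\[
\|\phi'^N\|_{L^{r'/N}} \lesssim \|A\|_{L^{a_k}}^{\frac{n}{n-1}} \lesssim \|\sigma\|_{L^{2p}}^{\frac{n}{n-1}},
\]
which is exactly the claimed power. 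It remains to check that $s_{k+1} \definedas r'/N$ increases strictly. From the relation above, $\frac{1}{r'} = \frac{n-2}{2(n-1)}\left(\frac1{s_k}-\frac1n\right)$, so
\[
\frac{1}{s_{k+1}} = \frac{(n-2)^2}{4n(n-1)}\left(\frac1{s_k} - \frac1n\right).
\]
This affine map has contraction factor less than $1$ and drives $\frac1{s_k}$ below $\frac1n$, and hence below $\frac1{2p}$, after finitely many steps, at which point we cap at $2p$. Both the step count and the initial exponent $s_0$ depend only on $n$ (given $p$; if $K$ must not depend on $p$, note that once $a_k\ge n$ any $r'$ is allowed, so one further step reaches $2p$). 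This produces the integer $K$.

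\textbf{Final claim.} The statement about $\bL W$ is one more application of the vector estimate established at the start of this section: $\|\bL W\|_{L^{2p}} \lesssim \|\phi^N\|_{L^{2p}}\|d\tau\|_{L^n}$.

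\textbf{Main obstacle.} The delicate point is the exponent bookkeeping: verifying that the recursion genuinely improves the exponent starting from $s_0 = \frac N2 +1$, and handling the threshold cases $q \ge n$ or $a_k \ge n$, where we switch to the second case of Proposition~\ref{propEstimateLichUpper}. A secondary point is that the implicit constants depend only on the listed quantities, so that $\|\sigma\|_{L^{2p}}$ enters only through the clean power $\frac{n}{n-1}$.
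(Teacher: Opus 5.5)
Your strategy is the paper's: start from $s_0=\frac N2+1$ on $\Omega_0$, alternate the vector estimate with the first case of Proposition~\ref{propEstimateLichUpper} so that the integrability of $\phi^N$ improves while the power $\|\sigma\|_{L^{2p}}^{\frac{n}{n-1}}$ is preserved, and close with the second case. However, the exponent bookkeeping, which you yourself single out as the delicate point, is wrong as written. Since $s_{k+1}=r'/N$, one has $\frac{1}{s_{k+1}}=\frac{N}{r'}$. Your coefficient $\frac{(n-2)^2}{4n(n-1)}$ comes from computing $\frac{1}{Nr'}$ instead. The correct recursion is
\[
\frac{1}{s_{k+1}}=\frac{n}{n-1}\left(\frac{1}{s_k}-\frac1n\right),
\qquad\text{equivalently}\qquad
\frac{1}{s_{k+1}}-1=\frac{n}{n-1}\left(\frac{1}{s_k}-1\right),
\]
which is exactly the paper's recursion for $p_k$. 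This affine map is expanding, with repelling fixed point $1$, so the contraction argument you give does not apply. Termination comes instead from
\[
\frac{1}{s_k}-1=\left(\frac{n}{n-1}\right)^k\left(\frac{1}{s_0}-1\right)\to-\infty .
\]
The first index $K_0$ with $\frac{1}{s_{K_0}}\le\frac1n$ is the least $k$ with $\left(\frac{n}{n-1}\right)^{k+2}\ge 2$, which depends only on $n$. Moreover, $\frac{1}{s_{k+1}}>0$ as long as $\frac{1}{s_k}>\frac1n$, so every intermediate exponent is finite.

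Your inference ``below $\frac1n$, hence below $\frac1{2p}$'' is also reversed, since $p>\frac n2$ gives $\frac{1}{2p}<\frac1n$. The endgame should be the one in your parenthetical remark. At $k=K_0$ one has $a_{K_0}=\min(s_{K_0},2p)\ge n$, so the second case of Proposition~\ref{propEstimateLichUpper} gives the $L^{2p}$ bound in one more step, and $K=K_0+1$; this is the paper's terminal step. With these two corrections your argument is complete. In fact, your cap $a_k=\min(s_k,2p)$ is more careful than the paper's terminal step. The paper uses $\|\sigma\|_{L^{p_{K_0}}}\lesssim\|\sigma\|_{L^{2p}}$ and the vector estimate at exponent $p_{K_0}$, although $p_{K_0}$ can exceed $2p$ (for $n=4$, $p_{K_0}=p_1=9$, while $2p$ can be close to $4$). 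With the cap, you only measure $\sigma$ in $L^{2p}$ and only apply the vector estimate with $q\le p$.
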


\begin{proof}
	By Proposition~\ref{propStability}, $\Phi(\Omega_0) \subset \Omega_0$. We
	construct a decreasing sequence of closed subsets $\Omega_k \subset \Omega_0$
	and an increasing sequence of exponents $p_0 < p_1 < \cdots$ defined by
	\[
		p_0 \definedas \frac{N}{2}+1,
		\qquad
		\frac{1}{p_{k+1}} - 1 = \frac{n}{n-1} \left(\frac{1}{p_k} - 1\right),
	\]
	so that
	\begin{equation}\label{eqBootstrap}
		\frac{1}{p_k} = 1 - \left(\frac{n}{n-1}\right)^k \left(1 - \frac{1}{p_0}\right)
		= 1 - \left(\frac{n}{n-1}\right)^k \frac{n}{2(n-1)}.
	\end{equation}

	From formula~\eqref{eqBootstrap}, $1/p_k \to -\infty$, so there exists a
	smallest integer $K_0 \geq 0$ such that $\frac{1}{p_{K_0}} \leq \frac{1}{n}$.
	We claim that $\frac{1}{p_{K_0}} \in \left(0, \frac{1}{n}\right)$. Indeed, for
	the positivity, the recurrence gives, for any $k < K_0$ (i.e.\
	$\frac{1}{p_k} > \frac{1}{n}$),
	\[
		\frac{1}{p_{k+1}} = 1 + \frac{n}{n-1}\left(\frac{1}{p_k} - 1\right)
		> 1 + \frac{n}{n-1}\left(\frac{1}{n} - 1\right) = 0,
	\]
	so in particular $\frac{1}{p_{K_0}} > 0$. For the strict upper bound, note that
	$\frac{1}{p_{K_0}} \in \left\{0, \frac{1}{n}\right\}$ would require, by
	formula~\eqref{eqBootstrap}, that $\left(\frac{n}{n-1}\right)^{K_0+1} = 2$ or
	$\left(\frac{n}{n-1}\right)^{K_0+2} = 2$. Both are impossible since
	$n/(n-1)$ is rational whereas $2^{1/m}$ is irrational for every $m \geq 2$.
	Hence $\frac{1}{p_{K_0}} \in \left(0, \frac{1}{n}\right)$, i.e.\ $p_{K_0} > n$.

	We define $\Omega_k$ by associating to each exponent $p_k$ a radius $R_k
		\definedas C_k \|\sigma\|_{L^{2p}}^{\frac{n}{n-1}}$, where $C_k$ is a positive
	constant to be specified inductively:
	\[
		\Omega_k \definedas \left\{\phi \in L^r(M, \bR_+) \;\middle|\;
		\|\phi^N\|_{L^{p_j}} \leq R_j\ \text{for}\ 0 \leq j \leq k\right\}.
	\]
	For the base case $k = 0$, we set $R_0$ by
	\[
		R_0^{2\frac{n-1}{n}} \definedas \frac{1}{\delta} \|\sigma\|_{L^{2p}}^2 \vol(M, g)^{1-\frac{1}{p}}
		\geq \frac{1}{\delta} \|\sigma\|_{L^2}^2,
	\]
	where the last inequality is H\"older's. By definition of $\Omega_0$, any $\phi
		\in \Omega_0$ satisfies $\|\phi^N\|_{L^{p_0}} \leq R_0$.

	\medskip\noindent\textit{Inductive step ($p_k < n$).}
	Let $\phi \in \Omega_k$ and $W = \mathrm{Vect}(\phi)$. Setting $q_k \definedas
		\left(\frac{1}{p_k}+\frac{1}{n}\right)^{-1}$ and applying elliptic regularity
	(Proposition~\ref{propVector}) followed by the Sobolev embedding theorem gives
	\begin{equation}\label{eqEstimateVector}
		\|\bL W\|_{L^{p_k}} \lesssim \|\phi^N\|_{L^{p_k}} \|d\tau\|_{L^n} \leq R_k \|d\tau\|_{L^n}.
	\end{equation}
	Since $p_k < n$, Proposition~\ref{propEstimateLichUpper} applied to $\phi' =
		\Phi(\phi)$ gives
	\[
		\|(\phi')^N\|_{L^{p_{k+1}}}^{2\frac{n-1}{n}}
		\lesssim \|A\|_{L^{p_k}}^2
		\lesssim \|\sigma\|_{L^{p_k}}^2 + \|\bL W\|_{L^{p_k}}^2
		\lesssim \|\sigma\|_{L^{2p}}^2 + R_k^2,
	\]
	where the exponents $p_k$ and $p_{k+1}$ satisfy $\frac{1}{p_k} = \frac{1}{n} +
		\frac{n-1}{n} \frac{1}{p_{k+1}}$, which is equivalent to the recurrence
	defining $(p_k)$. Let $\Lambda_{k+1}^2$ denote the implicit constant:
	\[
		\|(\phi')^N\|_{L^{p_{k+1}}}^{2\frac{n-1}{n}}
		\leq \Lambda_{k+1}^2 \left(\|\sigma\|_{L^{2p}}^2 + R_k^2\right).
	\]
	Since $R_k = C_k \|\sigma\|_{L^{2p}}^{\frac{n}{n-1}}$ and $\|\sigma\|_{L^{2p}}
		\leq 1$, we get
	\begin{align*}
		\|(\phi')^N\|_{L^{p_{k+1}}}^{2\frac{n-1}{n}}
		 & \leq \Lambda_{k+1}^2 \|\sigma\|_{L^{2p}}^2 \left(1 + C_k^2 \|\sigma\|_{L^{2p}}^{\frac{2}{n-1}}\right)
		\leq \Lambda_{k+1}^2 \left(1 + C_k^2\right) \|\sigma\|_{L^{2p}}^2.
	\end{align*}
	Setting $C_{k+1}$ by $C_{k+1}^{2\frac{n}{n-1}} \definedas \Lambda_{k+1}^2 (1 + C_k^2)$, we
	obtain $\|(\phi')^N\|_{L^{p_{k+1}}} \leq R_{k+1}$, hence $\Phi(\Omega_k) \subset
		\Omega_{k+1}$ and by induction $\Phi^k(\Omega_0) \subset \Omega_k$ for all
	$0 \leq k \leq K_0$.

	\medskip\noindent\textit{Terminal step ($p_{K_0} \geq n$).}
	We apply Proposition~\ref{propEstimateLichUpper} (second case) to
	$\phi \in \Omega_{K_0}$: since $p_{K_0} \geq n$, for any $r \geq 1$ we have
	\[
		\|(\Phi(\phi))^N\|_{L^r}^{2\frac{n-1}{n}}
		\lesssim \|A\|_{L^{p_{K_0}}}^2
		\lesssim \|\sigma\|_{L^{2p}}^2 + R_{K_0}^2
		\leq (1 + C_{K_0}^2)\|\sigma\|_{L^{2p}}^2,
	\]
	where the last inequality uses $\|\sigma\|_{L^{2p}} \leq 1$. Since $2(n-1)/n =
		1 + 2/N$, we conclude that
	\[
		\|(\Phi(\phi))^N\|_{L^r} \leq C_r \|\sigma\|_{L^{2p}}^{\frac{n}{n-1}},
	\]
	where $C_r$ depends on $(M, g)$, $p$, $K_0$, and $r$, but not on $\phi$.
	Setting $K \definedas K_0 + 1$, we have shown that every $\phi \in
		\Phi^K(\Omega_0) \subset \Phi(\Omega_{K_0})$ satisfies the claimed bound on
	$\|\phi^N\|_{L^r}$. Taking $r = 2p$ gives the first part of the proposition.

	\medskip

	For the second part, let $\phi \in \Phi^K(\Omega_0)$ and $W =
		\mathrm{Vect}(\phi)$. Applying the estimate~\eqref{eqEstimateVector} with $p_k$
	replaced by $2p$ and using the bound just established yields
	\[
		\|\bL W\|_{L^{2p}} \lesssim \|\phi^N\|_{L^{2p}} \|d\tau\|_{L^n}
		\lesssim \|\sigma\|_{L^{2p}}^{\frac{n}{n-1}} \|d\tau\|_{L^n},
	\]
	which completes the proof.
\end{proof}

In the following two lemmas, we fix $\phi_1, \phi_2 \in \Phi^K(\Omega_0)$ and
adopt the notation
\[
	W_i \definedas \mathrm{Vect}(\phi_i),
	\quad
	\phi'_i \definedas \Phi(\phi_i) = \mathrm{Lich}(W_i),
	\quad
	A_i \definedas |\sigma + \bL W_i|,
	\quad i = 1, 2,
\]
as well as $\psi \definedas \phi_1 - \phi_2$ and $\psi' \definedas \phi'_1 -
	\phi'_2$.

\begin{lemma}\label{lmEstimate_LW1-LW2}
	We have
	\[
		\left\|\bL W_1 - \bL W_2\right\|_{L^{2p}}
		\lesssim \|\psi\|_{L^r} \max \left\{\|\phi_1\|_{L^r}^{N-1}, \|\phi_2\|_{L^r}^{N-1}\right\} \|d\tau\|_{L^n}.
	\]
	In particular, there exists a constant $\Lambda_V > 0$ such that
	\[
		\left\|\bL W_1 - \bL W_2\right\|_{L^{2p}}
		\leq \Lambda_V \|\psi\|_{L^r} \|\sigma\|_{L^{2p}}^{\frac{n+2}{2(n-1)}}.
	\]
\end{lemma}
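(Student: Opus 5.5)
The plan is to use the linearity of the vector equation. The difference $W_1 - W_2$ solves
\[
	\Delta_{\bL}(W_1 - W_2) = \frac{n-1}{n}\left(\phi_1^N - \phi_2^N\right) d\tau .
\]
I will apply Proposition~\ref{propVector} with the same exponent $q$ used to define $\Phi$, namely $\frac1q = \frac1{2p} + \frac1n$. This gives $\|W_1 - W_2\|_{W^{2,q}} \lesssim \|(\phi_1^N - \phi_2^N)\, d\tau\|_{L^q}$. H\"older's inequality bounds the right-hand side by $\|\phi_1^N - \phi_2^N\|_{L^{2p}} \|d\tau\|_{L^n}$. Since $\frac1q - \frac1n = \frac1{2p}$, the Sobolev embedding $W^{2,q} \hookrightarrow W^{1,2p}$ then yields
\[
	\|\bL W_1 - \bL W_2\|_{L^{2p}} \lesssim \|\phi_1^N - \phi_2^N\|_{L^{2p}} \|d\tau\|_{L^n}.
\]
This is exactly the computation already done for $\mathrm{Vect}$ at the start of this section.

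Next I estimate $\phi_1^N - \phi_2^N$ pointwise. The mean value theorem on $t \mapsto t^N$ for non-negative arguments gives $|\phi_1^N - \phi_2^N| \leq N |\psi| \max\{\phi_1,\phi_2\}^{N-1} \leq N|\psi|(\phi_1 + \phi_2)^{N-1}$. I then apply H\"older with exponents $r$ and $r/(N-1)$. These are admissible because $r = 2pN$ gives $\frac{1}{2p} = \frac1r + \frac{N-1}{r}$. This yields
\[
	\|\phi_1^N - \phi_2^N\|_{L^{2p}} \leq N \|\psi\|_{L^r} \|\phi_1 + \phi_2\|_{L^r}^{N-1} \leq N 2^{N-1} \|\psi\|_{L^r} \max\left\{\|\phi_1\|_{L^r}, \|\phi_2\|_{L^r}\right\}^{N-1}.
\]
This proves the first inequality.

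For the second inequality, I use Proposition~\ref{propBootstrap}. Since $\phi_i \in \Phi^K(\Omega_0)$, we have $\|\phi_i\|_{L^r}^N = \|\phi_i^N\|_{L^{2p}} \leq C\|\sigma\|_{L^{2p}}^{\frac{n}{n-1}}$. Hence
\[
	\|\phi_i\|_{L^r}^{N-1} \leq C^{\frac{N-1}{N}} \|\sigma\|_{L^{2p}}^{\frac{n}{n-1}\cdot\frac{N-1}{N}}.
\]
Because $\frac{N-1}{N} = \frac{n+2}{2n}$, the exponent equals $\frac{n+2}{2(n-1)}$. Absorbing $\|d\tau\|_{L^n}$ and the constants into $\Lambda_V$ gives the claim.

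There is no real obstacle here. The only points to check are that the exponent bookkeeping closes ($r = 2pN$ and the relation between $q$ and $2p$) and that the implicit constant in Proposition~\ref{propBootstrap} is independent of $\phi_i$, so that $\Lambda_V$ is uniform.
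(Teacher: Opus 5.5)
Your proposal is correct and follows the paper's proof. Linearity of $\Delta_{\bL}$ together with elliptic regularity and Sobolev reduces the claim to bounding $\|\phi_1^N - \phi_2^N\|_{L^{2p}}$, this is handled by H\"older with $r = 2pN$, and the second estimate follows from Proposition~\ref{propBootstrap} since $\frac{n}{n-1}\cdot\frac{N-1}{N} = \frac{n+2}{2(n-1)}$. The only cosmetic difference is that you use the pointwise bound $\max\{\phi_1,\phi_2\}^{N-1} \le (\phi_1+\phi_2)^{N-1}$ where the paper writes $\phi_1^N - \phi_2^N = N\psi\int_0^1(\lambda\phi_1 + (1-\lambda)\phi_2)^{N-1}\,d\lambda$ and uses convexity of the norm; this costs you a harmless factor $2^{N-1}$ in the implicit constant.
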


\begin{proof}
	Subtracting the equations satisfied by $W_1$ and $W_2$, we have that $W_1-W_2$
	satisfies:
	\[
		\Delta_{\bL} (W_1-W_2) = \frac{n-1}n (\phi_1^N - \phi_2^N)d\tau.
	\]
	Proceeding as in the proof of Proposition~\ref{propBootstrap}, we have
	\[
		\left\|\bL W_1 - \bL W_2\right\|_{L^{2p}}
		\lesssim \left\|\phi_1^N - \phi_2^N\right\|_{L^{2p}} \|d\tau\|_{L^n}.
	\]
	And, by calcuations similar to the ones in
	Proposition~\ref{propLowerBoundLich},
	\[
		\phi_1^N - \phi_2^N = N (\phi_1 - \phi_2) \int_0^1 \left(\lambda \phi_1 + (1-\lambda) \phi_2\right)^{N-1} d\lambda.
	\]
	By H\"older's inequality together with the convexity of the norm, we have
	\begin{align*}
		\left\|\phi_1^N - \phi_2^N\right\|_{L^{2p}}
		 & \leq N \|\phi_1 - \phi_2\|_{L^{2Np}} \int_0^1 \left\|\left(\lambda \phi_1 + (1-\lambda) \phi_2\right)\right\|_{L^{2Np}}^{N-1} d\lambda \\
		 & \leq N \|\psi\|_{L^r} \max \left\{\|\phi_1\|_{L^r}^{N-1}, \|\phi_2\|_{L^r}^{N-1}\right\}.
	\end{align*}
	This gives the first estimate. For the second, Proposition~\ref{propBootstrap}
	gives
	\[
		\|\phi_i\|_{L^r}^N \lesssim \|\sigma\|_{L^{2p}}^{\frac{n}{n-1}},
		\quad\text{hence}\quad
		\|\phi_i\|_{L^r}^{N-1} \lesssim \|\sigma\|_{L^{2p}}^{\frac{n}{n-1} \cdot \frac{N-1}{N}}
		= \|\sigma\|_{L^{2p}}^{\frac{n+2}{2(n-1)}},
	\]
	and the second estimate follows.
\end{proof}

\begin{lemma}\label{lmMajorationPsi}
	There exists a constant $\Lambda_L = \Lambda_L(M, g, p) > 0$ such that
	\[
		\|\psi'\|_{L^{2Np}}
		\leq \Lambda_L \frac{\left\|\bL W_1 - \bL W_2\right\|_{L^{2p}} \left\|2 \sigma + \bL W_1 + \bL W_2\right\|_{L^{2p}}}{\left(\inf_M \phi'_1\right)^{N+1}}.
	\]
\end{lemma}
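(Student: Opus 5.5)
Subtract the two Lichnerowicz equations satisfied by $\phi'_1$ and $\phi'_2$, write the result as a linear equation for $\psi'$, and apply linear elliptic theory. With $A_i = |\sigma + \bL W_i|$ and $\phi'_i$ solving~\eqref{eqLich}, we get
\[
	-\frac{4(n-1)}{n-2}\Delta\psi' + \scal\,\psi' + \frac{n-1}{n}\tau^2\left((\phi'_1)^{N-1} - (\phi'_2)^{N-1}\right) - A_2^2\left((\phi'_1)^{-N-1} - (\phi'_2)^{-N-1}\right) = \frac{A_1^2 - A_2^2}{(\phi'_1)^{N+1}}.
\]
As in the proof of Proposition~\ref{propLowerBoundLich}, the fundamental theorem of calculus turns the two differences of powers into $\psi'$ times a coefficient. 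This gives
\[
	-\frac{4(n-1)}{n-2}\Delta\psi' + \scal\,\psi' + B^2\psi' = F,
	\qquad F \definedas \frac{A_1^2 - A_2^2}{(\phi'_1)^{N+1}}.
\]
Here the coefficient is
\[
	B^2 = \frac{n-1}{n}(N-2)\tau^2\int_0^1(\cdot)^{N-2}\,ds + (N+2)A_2^2\int_0^1(\cdot)^{-N-2}\,ds \geq 0.
\]
The key observation is that the potential added to the conformal Laplacian is nonnegative.

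Next, bound $F$. Pointwise,
\[
	A_1^2 - A_2^2 = \langle \bL W_1 - \bL W_2,\, 2\sigma + \bL W_1 + \bL W_2\rangle,
\]
so Hölder gives
\[
	\|A_1^2 - A_2^2\|_{L^p} \leq \|\bL W_1 - \bL W_2\|_{L^{2p}}\,\|2\sigma + \bL W_1 + \bL W_2\|_{L^{2p}}.
\]
Dividing by $(\inf_M\phi'_1)^{N+1}$, we obtain
\[
	\|F\|_{L^p} \leq \frac{\|\bL W_1 - \bL W_2\|_{L^{2p}}\,\|2\sigma + \bL W_1 + \bL W_2\|_{L^{2p}}}{(\inf_M\phi'_1)^{N+1}}.
\]
This is exactly the numerator and denominator of the claim. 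It therefore remains to show $\|\psi'\|_{L^{2Np}} \leq \Lambda_L\|F\|_{L^p}$ with $\Lambda_L$ independent of $B^2$. Since $p > n/2$, $W^{2,p} \hookrightarrow L^\infty \hookrightarrow L^{2Np}$, so a uniform $L^\infty$ bound, or even an $L^{2Np}$ bound, suffices.

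The main obstacle is this uniformity in the potential $B^2$, which depends on $\phi'_i$, $A_2$ and $\tau$ and is not controlled a priori. I would handle it by comparison. Conformally change to $\gbar = \psi^\kappa g$ with $\scal_{\gbar} \geq \mu > 0$, as in Proposition~\ref{propEstimateLichUpper}, using~\eqref{eqConfTrans}; the potential $B^2$ only gets multiplied by $\psi^{-\kappa}$ and stays nonnegative. Then let $w^\pm$ solve
\[
	-\frac{4(n-1)}{n-2}\Delta_{\gbar} w^\pm + \scal_{\gbar} w^\pm = \overline{F}^\pm \geq 0,
\]
where $\overline{F}^\pm$ are the positive and negative parts of the transformed right-hand side $\overline{F} = \psi^{-1-\kappa}F$. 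By the maximum principle, $w^\pm \geq 0$, so $w^+$ is a supersolution of the equation with potential $B^2\psi^{-\kappa}$. Hence $-w^- \leq \overline{\psi'} \leq w^+$ pointwise, again by the maximum principle~\cite[Theorem 8.1]{GilbargTrudinger}, applicable since $\scal_{\gbar} + B^2\psi^{-\kappa} > 0$. Then $|\overline{\psi'}| \leq w^+ + w^-$, and the invertibility of the conformal Laplacian of $\gbar$ from $W^{2,p}$ to $L^p$ together with Sobolev embedding gives $\|w^\pm\|_{L^\infty} \lesssim \|F\|_{L^p}$. The constant depends only on $(M,g,p)$, since $\psi$ is bounded above and below. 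Returning to $g$ yields the lemma. A technical point to check is that $B^2 \in L^p$, so the maximum principle applies in $W^{2,p}$. This holds because $\phi'_i \in W^{2,p}$ are continuous and positive and $A_2 \in L^{2p}$.
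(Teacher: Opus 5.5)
Your proof is correct, but it takes a different route from the paper's.

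\textbf{The paper's route.} The paper never linearizes. It tests the difference of the two Lichnerowicz equations against $|\psi'|^{2\alpha-2}\psi'$ with $\alpha=2p$. It then discards the $\tau^2$ and $A_2^2$ terms because monotonicity gives them a sign. Next it bounds $\||\psi'|^\alpha\|_{L^N}^2$ by the remaining energy via Sobolev; this step assumes $\scal\geq\mu>0$, and the general positive-Yamabe case is only sketched as a conformal change. It closes with a four-factor H\"older inequality, the fourth factor being a volume term, which yields the $L^{2Np}$ norm directly.

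\textbf{Your route.} You encode the same monotonicity as a nonnegative potential $B^2$. The comparison principle then dominates $|\overline{\psi'}|$ pointwise by the solution of a fixed, potential-free problem in the metric $\gbar$ with positive scalar curvature. Uniformity in $B^2$ therefore comes for free from the $W^{2,p}\to L^\infty$ estimate for that fixed operator.

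\textbf{What each buys.} Your argument gives a stronger conclusion, $\|\psi'\|_{L^\infty}\lesssim\|F\|_{L^p}$, and handles the general positive-Yamabe case explicitly. The price is that you need the weak maximum principle for operators with $L^p$ potentials and the regularity checks you flagged. These are fine here. Since $\scal_{\gbar}+\psi^{-\kappa}B^2\geq\mu>0$, testing with the negative part of a $W^{2,p}$ function works, because $p>n/2$ makes $W^{2,p}\subset W^{1,2}\cap L^\infty$. The paper's route is purely integral and needs neither pointwise comparison nor any regularity of $B^2$.

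\textbf{Two small points.} Differentiating $t^{N-1}$ and $t^{-N-1}$ directly gives the constants $N-1$ and $N+1$ in $B^2$, not $N-2$ and $N+2$. Those come from the logarithmic substitution in Proposition~\ref{propLowerBoundLich}. Only the sign of $B^2$ matters, so this is harmless. Also note that $\psi$ already denotes $\phi_1-\phi_2$ in this section, so the conformal factor deserves another name.
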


\begin{proof}
	We assume here for simplicity that $\scal \geq \mu > 0$. The general case can
	be handled by means similar to those in
	Proposition~\ref{propEstimateLichUpper}.

	We subtract the equations satisfied by $\phi'_1$ and $\phi'_2$:
	\[
		\left\lbrace
		\begin{aligned}
			- \frac{4(n-1)}{n-2} \Delta \phi'_1 + \scal \phi'_1 + \frac{n-1}{n} \tau^2 (\phi'_1)^{N-1} & = \frac{A_1^2}{(\phi'_1)^{N+1}}, \\
			- \frac{4(n-1)}{n-2} \Delta \phi'_2 + \scal \phi'_2 + \frac{n-1}{n} \tau^2 (\phi'_2)^{N-1} & = \frac{A_2^2}{(\phi'_2)^{N+1}},
		\end{aligned}
		\right.
	\]
	and get
	\begin{align*}
		 & - \frac{4(n-1)}{n-2} \Delta \psi' + \scal \psi' + \frac{n-1}{n} \tau^2 \left((\phi'_1)^{N-1} - (\phi'_2)^{N-1}\right)                  \\
		 & \qquad\qquad\qquad = \frac{A_1^2}{(\phi'_1)^{N+1}} - \frac{A_2^2}{(\phi'_2)^{N+1}}                                                     \\
		 & \qquad\qquad\qquad = \frac{A_1^2 - A_2^2}{(\phi'_1)^{N+1}} + A_2^2 \left(\frac{1}{(\phi'_1)^{N+1}} - \frac{1}{(\phi'_2)^{N+1}}\right).
	\end{align*}

	Let $\alpha \geq 1$ be a constant to be chosen later. We multiply the previous
	equation by $(\psi')^{2\alpha-1} \definedas |\psi'|^{2\alpha-2} \psi'$ and
	integrate over $M$. We get
	\begin{align*}
		 & \int_M (\psi')^{2\alpha-1}\left(- \frac{4(n-1)}{n-2} \Delta \psi' + \scal \psi'\right)\, d\mu^g                                   \\
		 & \qquad\qquad = -\frac{n-1}{n} \int_M \tau^2 (\psi')^{2\alpha-1} \left((\phi'_1)^{N-1} - (\phi'_2)^{N-1}\right) \, d\mu^g          \\
		 & \qquad\qquad\qquad + \int_M A_2^2 (\psi')^{2\alpha-1} \left(\frac{1}{(\phi'_1)^{N+1}} - \frac{1}{(\phi'_2)^{N+1}}\right)\, d\mu^g \\
		 & \qquad\qquad\qquad + \int_M (\psi')^{2\alpha-1} \frac{A_1^2 - A_2^2}{(\phi'_1)^{N+1}}\, d\mu^g.
	\end{align*}
	As $(\psi')^{2\alpha-1}$ has the same sign as $\psi' = \phi'_1 - \phi'_2$, the
	first two terms on the right-hand side are non-positive. So,
	\[
		\int_M (\psi')^{2\alpha-1}\left(- \frac{4(n-1)}{n-2} \Delta \psi' + \scal \psi'\right)\, d\mu^g
		\leq \int_M (\psi')^{2\alpha-1} \frac{A_1^2 - A_2^2}{(\phi'_1)^{N+1}}\, d\mu^g.
	\]
	Note also that
	\begin{align*}
		\int_M (\psi')^{2\alpha-1} (-\Delta \psi') \, d\mu^g
		 & = \int_M \left\<d (\psi')^{2\alpha-1}, d\psi'\right\>\, d\mu^g                \\
		 & = (2\alpha - 1) \int_M |\psi'|^{2\alpha-2}\left| d\psi'\right|^2\, d\mu^g     \\
		 & = \frac{2\alpha-1}{\alpha^2} \int_M \left| d|\psi'|^\alpha\right|^2\, d\mu^g.
	\end{align*}
	All in all, we have obtained
	\[
		\int_M \left(\frac{4(n-1)}{n-2} \frac{2\alpha-1}{\alpha^2} \left| d|\psi'|^\alpha\right|^2 + \scal |\psi'|^{2\alpha}\right)\, d\mu^g
		\leq \int_M (\psi')^{2\alpha-1} \frac{A_1^2 - A_2^2}{(\phi'_1)^{N+1}}\, d\mu^g.
	\]
	As the scalar curvature of $g$ is bounded from below, we can use the Sobolev
	embedding theorem to get
	\[
		\|\psi'\|_{L^{N\alpha}}^{2\alpha} = \left\||\psi'|^\alpha\right\|_{L^N}^2
		\lesssim \int_M (\psi')^{2\alpha-1} \frac{A_1^2 - A_2^2}{(\phi'_1)^{N+1}}\, d\mu^g.
	\]
	We apply H\"older's inequality to the right-hand side with exponents $a, b, c,
		d \in (1, \infty)$ satisfying $\frac{1}{a} + \frac{1}{b} + \frac{1}{c} +
		\frac{1}{d} = 1$. We choose
	\[
		(2\alpha-1)a = N\alpha,
		\quad
		b = c = 2p,
		\quad
		\alpha = 2p,
	\]
	so that $\frac{1}{a} = \frac{2\alpha - 1}{N\alpha}$ and
	\[
		\frac{1}{d} = 1 - \frac{2\alpha-1}{N\alpha} - \frac{1}{p}
		= \frac{2}{n} - \left(1 - \frac{1}{2N}\right)\frac{1}{p}
		\in \left(0, \frac{2}{n}\right),
	\]
	where the bounds follow from $p > n/2$. Using $A_1^2 - A_2^2 = \< \bL W_1 - \bL
		W_2, 2\sigma + \bL W_1 + \bL W_2\>$, we obtain
	\[
		\|\psi'\|_{L^{2Np}}^{2\alpha}
		\lesssim \frac{1}{\left(\inf_M \phi'_1\right)^{N+1}}
		\left\|\psi'\right\|_{L^{2Np}}^{2\alpha-1}
		\left\|\bL W_1 - \bL W_2\right\|_{L^{2p}}
		\left\|2 \sigma + \bL W_1 + \bL W_2\right\|_{L^{2p}},
	\]
	which gives the claimed estimate after dividing both sides by
	$\|\psi'\|_{L^{2Np}}^{2\alpha-1}$.
\end{proof}

We can now finish the proof of the main result of the paper.

\begin{proof}[Proof of Theorem~\ref{thmMain}]
	Combining Lemmas~\ref{lmEstimate_LW1-LW2} and~\ref{lmMajorationPsi} and
	assuming $\|\sigma\|_{L^{2p}} \leq 1$, we obtain
	\[
		\|\psi'\|_{L^r}
		\leq \frac{\Lambda_L \Lambda_V}{\left(\inf_M \phi'_1\right)^{N+1}}
		\|\psi\|_{L^r} \|\sigma\|_{L^{2p}}^{\frac{n+2}{2(n-1)}}
		\left\|2 \sigma + \bL W_1 + \bL W_2\right\|_{L^{2p}}.
	\]
	Define the contraction constant
	\[
		\lambda \definedas \frac{\Lambda_L \Lambda_V}{\left(\inf_M \phi'_1\right)^{N+1}}
		\left\|2 \sigma + \bL W_1 + \bL W_2\right\|_{L^{2p}}
		\|\sigma\|_{L^{2p}}^{\frac{n+2}{2(n-1)}}.
	\]
	We show that $\lambda < 1$ when $\|\sigma\|_{L^{2p}}$ is sufficiently small.

	\medskip\noindent\textit{Bounding $\|2\sigma + \bL W_1 + \bL W_2\|_{L^{2p}}$.}
	From Proposition~\ref{propBootstrap}, $\|\bL W_i\|_{L^{2p}} \leq C'
		\|\sigma\|_{L^{2p}}^{\frac{n}{n-1}}$ for $i = 1, 2$. Since $\|\sigma\|_{L^{2p}}
		\leq 1$ implies $\|\sigma\|_{L^{2p}}^{\frac{n}{n-1}} \leq \|\sigma\|_{L^{2p}}$, we get
	\[
		\left\|2 \sigma + \bL W_1 + \bL W_2\right\|_{L^{2p}}
		\leq 2 \|\sigma\|_{L^{2p}} + 2 C' \|\sigma\|_{L^{2p}}^{\frac{n}{n-1}}
		\leq (2 + 2 C') \|\sigma\|_{L^{2p}},
	\]
	so
	\[
		\lambda \leq \frac{2\Lambda_L \Lambda_V (1 +C')}{\left(\inf_M \phi'_1\right)^{N+1}}
		\|\sigma\|_{L^{2p}}^{\frac{3n}{2(n-1)}}.
	\]

	\medskip\noindent\textit{Bounding $\inf_M \phi'_1$ from below.}
	Since $A_1 = \sigma + \bL W_1$, Proposition~\ref{propBootstrap} gives
	\[
		\|A_1\|_{L^{2p}} \leq \|\sigma\|_{L^{2p}} + C' \|\sigma\|_{L^{2p}}^{\frac{n}{n-1}},
	\]
	so, if $\|\sigma\|_{L^{2p}}$ is small enough, the condition $\|A_1\|_{L^{2p}}
		\leq C_{g,\tau}^{-1/2}$ in Lemma~\ref{lmSubsolution} is fulfilled. Moreover,
	the hypothesis $\|\sigma\|_{L^{2p}} \leq \omega_0 \|\sigma\|_{L^2}$ and $\|\bL
		W_1\|_{L^{2p}} \leq C' \|\sigma\|_{L^{2p}}^{\frac{n}{n-1}}$ give
	\[
		\omega_1 \definedas \frac{\|A_1\|_{L^{2p}}}{\|A_1\|_{L^2}}
		\leq \frac{\|\sigma\|_{L^{2p}} + C' \|\sigma\|_{L^{2p}}^{\frac{n}{n-1}}}{\|\sigma\|_{L^2}}
		\leq \omega_0 \left(1 + C' \|\sigma\|_{L^{2p}}^{\frac{1}{n-1}}\right).
	\]
	So $\omega_1$ is bounded above independently of $\sigma$ (for
	$\|\sigma\|_{L^{2p}} \leq 1$). Proposition~\ref{propLowerBoundLich} then yields
	\begin{align*}
		\inf_M \phi_1'
		 & \geq \mu_{g, \tau}\, \omega_1^{-\frac{3n-2}{2(n-1)}} \|A_1\|_{L^2}^{\frac{n-2}{2(n-1)}} \\
		 & \gtrsim \|\sigma\|_{L^2}^{\frac{n-2}{2(n-1)}}
		\gtrsim \|\sigma\|_{L^{2p}}^{\frac{n-2}{2(n-1)}}.
	\end{align*}

	\medskip\noindent\textit{Conclusion.}
	Combining the two bounds above,
	\[
		\lambda \lesssim
		\frac{\|\sigma\|_{L^{2p}}^{\frac{3n}{2(n-1)}}}
		{\left(\|\sigma\|_{L^{2p}}^{\frac{n-2}{2(n-1)}}\right)^{N+1}}
		= \|\sigma\|_{L^{2p}}^{\frac{1}{n-1}},
	\]
	which tends to zero as $\|\sigma\|_{L^{2p}} \to 0$. In particular, for
	$\|\sigma\|_{L^{2p}}$ small enough, $\lambda < 1$, and we have
	\[
		\left\|\Phi(\phi_1) - \Phi(\phi_2)\right\|_{L^r} \leq \lambda \|\phi_1 - \phi_2\|_{L^r}
		\quad\text{for all } \phi_1, \phi_2 \in \overline{\Phi^K(\Omega_0)}.
	\]
	The set $\Omega \definedas \overline{\Phi^K(\Omega_0)}$ is closed in $L^r(M,
		\bR)$ and $\Phi$-invariant. By the Banach fixed point theorem, $\Phi$ has a
	unique fixed point $\phi \in \Omega$. Setting $W \definedas
		\mathrm{Vect}(\phi)$, the pair $(\phi, W)$ is a solution to the conformal
	constraint equations~\eqref{eqConstraints}.

	Finally, any solution $(\phi, W)$ to~\eqref{eqConstraints} with $V(\phi, W)
		\leq V_{\max}$ belongs to $\Omega_0$ by Proposition~\ref{propVolumeBound}, and
	hence to $\Omega$. Since $\Phi$ has a unique fixed point in $\Omega$, this
	solution coincides with $(\phi, W)$, establishing uniqueness.
\end{proof}